\documentclass[a4paper,UKenglish,cleveref, autoref, thm-restate,  numberwithinsect]{lipics-v2021}
\usepackage{mathtools, tikz}
\usetikzlibrary{arrows.meta,positioning}
\nolinenumbers

\newcommand{\F}{\mathbb{F}}

\newcommand{\N}{\mathbb{N}}

\newcommand{\Q}{\mathbb{Q}}

\newcommand{\bfx}{\mathbf{x}}
\newcommand{\bfy}{\mathbf{y}}
\newcommand{\bfa}{\mathbf{a}}

\newcommand{\VP}{\mathsf{VP}}
\newcommand{\VNP}{\mathsf{VNP}}
\newcommand{\sharpP}{\mathsf{\#P}}

\newcommand{\Sym}{\mathsf{Sym}}

\newcommand{\gen}{\mathsf{Gen}}

\newcommand{\alg}{\mathcal{A}}

\newcommand{\sdet}{\mathsf{sdet}}

\newcommand{\sgn}{\operatorname{sgn}}
\newcommand{\mat}{\operatorname{Mat}}
\newcommand{\tr}{\operatorname{Tr}}
\newcommand{\Cdet}{\mathsf{Cdet}}

\newcommand{\vnpac}{\mathsf{VNP}_{\mathsf{A,\overline{C}}}}
\newcommand{\vpac}{\mathsf{VP}_{\mathsf{A,\overline{C}}}}
\newcommand{\hcac}{\mathsf{HC}_{\mathsf{A,\overline{C}}}}

\def\final{0}  
\def\iflong{\iffalse}
\ifnum\final=0  
\definecolor{lightskyblue}{rgb}{0.53, 0.81, 0.98}\newcommand{\sanyam}[1]{{\small{\color{lightskyblue}[Sanyam: #1]}}}
\newcommand{\markus}[1]{{\small{\color{blue}[Markus: #1]}}}
\newcommand{\mridul}[1]{{\small{\color{green}[Mridul: #1]}}}

\else 
\newcommand{\sanyam}[1]{}
\newcommand{\markus}[1]{}
\newcommand{\mridul}[1]{}
\fi  

\newcommand{\eps}{\varepsilon}

\DeclareMathSymbol{\nmid}{\mathrel}{AMSb}{"2D}

\renewcommand{\tilde}{\widetilde}

\newtheorem*{theorem*}{Theorem}

\title{On the Principal Minor Expansion and Complexity of the Symmetrized Determinant} 

\author{Sanyam Agarwal {}}{Universit\"at des Saarlandes, Saarbr\"ucken, Germany}{agarwal@cs.uni-saarland.de}{}{}
\author{Markus Bl\"aser {}}{Universit\"at des Saarlandes, Saarbr\"ucken, Germany}{mblaeser@cs.uni-saarland.de}{}{}

\author{Mridul Gupta {}}{Indian Institute of Technology Kanpur, Kanpur, India}{mridulg@cse.iitk.ac.in}{}{}

\authorrunning{Agarwal, Bl\"aser, and Gupta} 

\Copyright{Sanyam Agarwal, Markus Bl\"aser, and Mridul Gupta} 

\ccsdesc[500]{Theory of computation~Algebraic complexity theory} 

\keywords{Symmetrized Determinants, $\VP$, $\VNP$, $\sharpP$} 

\category{} 

\relatedversion{} 

\EventEditors{}
\EventNoEds{2}
\EventLongTitle{}
\EventShortTitle{}
\EventAcronym{}
\EventYear{}
\EventDate{}
\EventLocation{}
\EventLogo{}
\SeriesVolume{}
\ArticleNo{XX}

\begin{document}

\maketitle

\begin{abstract}
    Barvinok \cite{Barvinok2000NewPE} introduced the symmetrized determinant ($\sdet$) as a \emph{non-commutative} analogue of the determinant. Intuitively, given a square matrix over an associative algebra, we can obtain the symmetrized determinant by averaging over all possible multiplication orders in the Leibniz formula for the determinant. He used the symmetrized determinant to design algorithms estimating the permanent of a matrix. To this end, he showed that there is a $O(n^{r+3})$ algorithm computing $\sdet$, where $r$ is the dimension of the algebra, and is therefore polynomial-time computable for fixed $r$.

    In this work, we study the algebraic properties and complexity of $\sdet$. While most of the properties of the ordinary determinant don't generalize to $\sdet$ defined on non-commutative algebras, we show that the principal minor expansion of the $\sdet$ is analogous to the ordinary determinant. Second, we prove that there exists a polynomial-sized algebra such that computing the symmetrized determinant is $\sharpP$-hard. Third, we show that the associated polynomial family is $\VNP$-complete over a suitable polynomial-dimensional algebra in the non-commutative setting. Further, when seen as a family of polynomials over the matrix algebra, it is also $\VNP$-complete in the commutative setting. This places the symmetrized determinant among the natural complete families arising from algebraic computation.
\end{abstract}

\section{Introduction}
The determinant and the permanent occupy a central place in algebraic complexity theory \cite{burgisser2000completeness, ramprasadsurvey}. Although they are defined by almost identical summation formulas, the determinant is polynomial-time computable and in $\VP$, while the permanent is $\sharpP$ and $\VNP$ complete \cite{valiant1979completeness}. This contrast has naturally raised questions about the expressivity of determinant-like polynomials as compared to the permanent. A common approach has been to look towards non-commutative versions of the determinant and study their computability. The most standard way to introduce non-commutativity is to look at the \emph{Cayley} determinant. Formally, let $\alg$ be an associative algebra over a ring $\F$ of characteristic $0$. Let $M$ be a $n \times n$ matrix where each entry of $M$ (denoted $m_{ij}$) comes from $\alg$. We define the Cayley determinant as follows:
\begin{equation}\label{eq:cdet}
    \Cdet M= \sum_{\sigma \in \Sym_n} \sgn (\sigma) m_{1,\sigma(1) } \cdots m_{n,\sigma(n)}
\end{equation}
Over non-commutative algebras this inherently introduces an order of multiplication: we take the first element from the first row, second from second row, and so on. Along these lines, Barvinok \cite{Barvinok2000NewPE} introduced the notion of symmetrized determinants ($\sdet$). It extends the notion of determinants to non-commutative settings slightly differently compared to the Cayley determinant. Here, we take into account all orderings of multiplication for each permutation in the Leibniz formula for the standard determinant, essentially removing the ordering inherent in the Cayley determinant. The symmetrized determinant of matrix $M$, denoted $\sdet M$, is defined as the following:
\begin{equation}\label{eq:sdet}
    \sdet M={1 \over n!} \sum_{(\sigma, \tau) \in \Sym_n \times \Sym_n} (\sgn \sigma)(\sgn \tau) m_{\sigma(1) \tau(1)} \cdots m_{\sigma(n) \tau(n)}
\end{equation}
Barvinok showed that there exists a $O(n^{r+3})$ algorithm to compute $\sdet M$, where $r$ is the dimension of the algebra as an $\F$-vector space. Hence, for fixed dimension algebras (for example, constant sized matrices), computing the $\sdet$ can be achieved in polynomial time. Barvinok used these symmetrized determinants to design a randomized polynomial
time algorithm to estimate the permanent of non-negative matrices. This perspective was later sharpened by Moore and Russell \cite{moore2012approximating}, who studied permanent estimators over matrix and semisimple algebras based on these non-commutative determinants. They analyzed the variance of these estimators and showed that estimators designed using $\sdet$ can have small variance when the algebra dimension is sufficiently large. However, when the dimension is constant (the only regime where efficient algorithms for $\sdet$ are known) we do not get a polynomial-time approximation scheme for the permanent. In fact, their analysis suggests substantial obstacles to this program. This motivated us to study the symmetrized determinants in more detail: its algebraic properties, and its computability under algebras of non-constant dimension.

In general, since $\sdet$ is defined over an algebra, notion of invertibility is not well-defined anymore. Further, due to the non-commutativity of symmetrized determinants, properties of the standard determinants like multiplicativity (i.e, $\det AB = \det A \det B$) no longer hold. However, determinants are also known to follow many useful properties, 
one of them being the \emph{principal minor expansion}.
\begin{proposition}[Principal Minor Expansion Property \cite{horn2012matrix}]\label{prop:pme}
    Given a $n \times n$ matrix $A$ over a field $\F$, $\det (A+I_n) = \sum_{S \subseteq [n]} \det A_S $, where $I_n$ is the $n \times n$ identity matrix and $A_S$ is the principal submatrix of $A$ indexed by the elements in $S$. 
\end{proposition}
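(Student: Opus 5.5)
I will prove the identity directly from the Leibniz formula, expanding $\det(A+I_n)$ multilinearly and regrouping the terms by the set of indices at which an entry of $A$ is chosen. Write $B = A + I_n$, so $b_{ij} = a_{ij} + \delta_{ij}$, and start from
\[
\det B = \sum_{\sigma \in \Sym_n} \sgn(\sigma) \prod_{i=1}^n \bigl(a_{i\sigma(i)} + \delta_{i\sigma(i)}\bigr).
\]
Expanding each product gives
\[
\prod_{i=1}^n \bigl(a_{i\sigma(i)} + \delta_{i\sigma(i)}\bigr) = \sum_{S \subseteq [n]} \prod_{i \in S} a_{i\sigma(i)} \prod_{i \notin S} \delta_{i\sigma(i)},
\]
where $S$ is the set of rows at which the $a$-term is taken. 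Since $\F$ is a field, hence commutative, reordering the factors is harmless.

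Next I exchange the two sums and fix $S$. The factor $\prod_{i\notin S}\delta_{i\sigma(i)}$ is nonzero exactly when $\sigma$ fixes every point of $[n]\setminus S$. Such $\sigma$ map $S$ bijectively onto $S$, so they are in one-to-one correspondence with permutations $\pi \in \Sym(S)$. The sign is preserved under this correspondence: extending $\pi$ by the identity adds no inversions and no cycles of even length, so $\sgn(\sigma)=\sgn(\pi)$. The contribution of $S$ is therefore
\[
\sum_{\pi \in \Sym(S)} \sgn(\pi) \prod_{i\in S} a_{i\pi(i)} = \det A_S.
\]
Summing over all $S$ gives the claim, with the convention $\det A_\emptyset = 1$, which corresponds to $\sigma=\mathrm{id}$ with every $\delta$ chosen.

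The only step that needs care is the bijection between the permutations of $[n]$ that fix the complement of $S$ and $\Sym(S)$, together with the preservation of the sign; everything else is routine bookkeeping. This commutative argument also shows where the difficulty will lie when the analogue is proved for $\sdet$. There, the order of the factors matters, so after selecting which positions carry identity factors one must check that averaging over the $\tau$-orderings still collapses to $\sdet A_S$. Doing so requires counting how many of the $n!$ orderings restrict to each ordering of $S$, and this count is what the $1/n!$ normalization has to absorb.
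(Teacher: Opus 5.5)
Your proof is correct, and it is the standard textbook argument. The paper does not prove this proposition itself; it quotes it from Horn--Johnson. Its proof of the $\sdet$ analogue, Theorem~\ref{thm:pma-expansion}, does specialize to it, because $\sdet M=\det M$ when the entries commute (reindex by $j=\sigma(i)$ and use $\sgn\sigma\,\sgn\tau=\sgn(\tau\sigma^{-1})$).

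The two routes differ in where the subset $S$ lives. In the Leibniz formula the $i$-th factor comes from row $i$, so the set of positions carrying an $a$-entry is also the set of rows. Then $\sigma$ fixing $[n]\setminus S$ forces $\sigma(S)=S$, and each $S$ contributes exactly $\det A_S$. In $\sdet$ the position set $S$ and the row set $G=\sigma(S)=\tau(S)$ decouple. The paper therefore has to count the $(n-k)!$ completions outside $S$, re-index over all $k$-sets $G$, and check that each $\sdet A_G$ receives total weight $\binom{n}{k}\cdot\frac{(n-k)!\,k!}{n!}=1$. This also corrects the forecast in your last paragraph: a fixed position set $S$ contributes $\binom{n}{k}^{-1}\sum_{|G|=k}\sdet A_G$, not $\sdet A_S$. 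Only the sum over all $S$ of a given size recovers the principal minors. Your direct route is shorter, and since it never divides by $n!$ it holds over every field (indeed every commutative ring). Specializing Theorem~\ref{thm:pma-expansion} needs $n!$ to be invertible, because $\sdet$ itself carries the factor $1/n!$.

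One justification needs fixing. Extending $\pi\in\Sym(S)$ by the identity on $[n]\setminus S$ does \emph{not} in general add no inversions. Take $n=3$, $S=\{1,3\}$ and $\pi$ the transposition. Then $\sigma=(1\,3)$ has the three inversions $(1,2),(1,3),(2,3)$, while $\pi$ has only one. The ``mixed'' inversions between $S$ and the fixed points do not vanish; they are merely even in number. That is exactly what the paper's Lemma~\ref{lem:mixed-inversion} gives, applied with $\tau=\mathrm{id}$. Your cycle-type argument is correct and suffices on its own: adding fixed points leaves $n$ minus the number of cycles unchanged. So drop the inversion claim, or, if you use the paper's inversion-count definition of $\sgn$, cite its equivalence with the cycle-type formula.
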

This property has found applications in varied settings. For instance, it serves as the \emph{normalization} constant behind $L$-ensemble determinantal point processes ($\mathsf{DPP}$s) which are used extensively in machine learning for recommender systems, clustering, and many others, see \cite{kulesza2012determinantal} for a survey. More concretely, we can interpret any $n \times n$ positive semi-definite matrix $L$ as a probability distribution over $n$ binary random variables using this identity as computing the normalization constant can now be done efficiently, since it is equivalent to computing the determinant of a single matrix instead of the exponential sum over all subsets. Not only that, in combinatorics, a direct application of this property for the Laplacian matrix of a graph counts the number of spanning rooted forests in the graph \cite{chebotarev2006matrix}. It's natural to question whether similar properties hold for the $\sdet$?

Besides, non-commutative algebraic computation has long been known to behave very differently from the commutative world: In \cite{nisan1991lower}, Nisan showed an exponential lower bound for computing the non-commutative determinant over the \emph{free algebra} via an Algebraic Branching Program ($\mathsf{ABP}$). His work was extended to a much larger class of non-commutative algebras by Chien and Sinclair in \cite{chien2011almost}. In \cite{arvind2010hardness}, Arvind and Srinivasan showed that small circuits for the non-commutative {Cayley} determinant would imply $\VP=\VNP$, hinting at the potential non-existence of such a small circuit. This was further strengthened when Bl\"aser \cite{blaser2015noncommutativity} showed that computing the {Cayley} determinant is $\sharpP$-hard if a specific quotient ring of the given algebra is non-commutative. In particular, this implies computing it is hard even over finite-dimensional matrix algebras, which is in sharp contrast to an efficient polytime algorithm for symmetrized determinants in these settings \cite{Barvinok2000NewPE}. Not only that, there has also been work on extending algebraic complexity classes like $\VP$ and $\VNP$ to non-commutative settings. Hrube\v{s} et al.\ \cite{hrubevs2010relationless} developed a general completeness theory for algebraic models without commutativity and/or associativity, showing that permanent-type families remain complete in relationless settings and that determinant-type families retain significant expressive power. More recently, Arvind et al.\ \cite{arvind2016noncommutative} studied these non-commutative algebraic classes and established complete problems in these contexts. These results naturally motivate the study of symmetrized determinants from a complexity perspective over larger algebras.
\vspace{-1em}
\subsection{Our results}
In this work, we study the algebraic behavior and computational properties of the symmetrized determinants. As our first result, we show that the \emph{principal minor expansion} property holds for the symmetrized determinants, see Theorem~\ref{thm:pma-expansion}. The proof relies on a careful combinatorial rearrangement in the expansion of the $\sdet$ formula in~(\ref{eq:sdet}) along with exploiting some properties exhibited by a pair of permutations. 
Thus, using Barvinok's result over fixed-dimension algebras it implies efficient computation of the analogue of \emph{normalization} constant of $\mathsf{DPP}$s and opens possibilities to exploit symmetrized determinants as probability distributions in machine learning and beyond.

Second, in Theorem~\ref{thm:sharpP-hardness} we show that computing the symmetrized determinant is $\sharpP$-Hard when the input matrix has entries in a suitably chosen non-commutative algebra whose dimension is polynomial in the size of the input. This is in contrast with the polynomial time algorithm for algebras of constant dimension. To prove the theorem, we interpret the matrix as an adjacency matrix of a graph with corresponding edge weights. We then define an algebra which ensures that all monomials corresponding to non-cycles in the graph go to $0$, while all cycles of the same length are grouped together. We can then look at the coefficient of size $n$ cycles to recover the number of Hamiltonian cycles. An implication of our result is the following: Moore and Russell \cite{moore2012approximating} suggested a possible approach towards getting an algebraic polynomial-time approximation scheme that relied on finding polynomial time algorithms for $\sdet$ when the algebra dimension is $\mathsf{poly}(n)$. While we only show hardness for a particular polynomial-dimension algebra, it unfortunately hints towards potential roadblocks in that approach.

Third, in Theorem~\ref{thm:vnpac-complete} we prove that the symmetrized determinant is $\VNP$-complete over a similar non-commutative algebra of polynomial dimension, matching the behaviour of the \emph{Cayley} determinant over such algebras \cite{arvind2010hardness}. Here, the $\VNP$ completeness is over the associative, non-commutative setting ($\vnpac$) as defined in \cite{hrubevs2010relationless}. For the proof, we use a similar algebra as above and additionally, extend the proof of $\VNP$-completeness of the Hamiltonian Cycle  polynomial to the associative, non-commutative setting, see Theorem~\ref{thm:hcac-vpnac-complete}. Finally, we reduce the $\sdet$ polynomial to it. This proof can also be extended to show that the symmetrized determinants is $\VNP$-complete as a family of commutative polynomials, when evaluated over a matrix algebra of polynomial dimension, see Theorem~\ref{thm:vnpc-complete}.

\emph{Author's note: After submission, the authors became aware that analogues of Theorem~\ref{thm:sharpP-hardness} and Theorem~\ref{thm:vnpac-complete} were previously established in \cite[Theorem 19, 20]{arvind2010hardness}; we present our independent proofs here for completeness.}

\section{Notation and Preliminaries}
We use $\alg$ to denote an arbitrary \emph{unital} (i.e., it has a multiplicative identity) algebra defined over a field of characteristic $0$. Let the multiplicative identity of $\alg$ be $1_{\alg}$. $\mat(\alg,p,q)$ denotes the set of all matrices of size $p \times q$ where each entry comes form $\alg$. Note that when $p=q=n$, $\mat(\alg,n,n)$ forms an algebra. $I_{\mat(\alg,n,n)}$ denotes the identity matrix in $\mat(\alg,n,n)$ where diagonal entry is $1_{\alg}$ and each off-diagonal entry is $0$. $A,B$ are used to denote arbitrary matrices from $\mat(\alg,n,n)$. $\sdet A$ is the function computing the symmetrized determinant of $A$: it takes a matrix $A \in \mat(\alg,n,n)$ as input and outputs an element in $\alg$. Unless otherwise stated, we use $\alpha, \sigma, \tau$ to denote permutations from $\Sym_n$, the symmetric group on $n$ elements. The sign of a permutation, denoted $\sgn(\sigma)$, counts the number of inversions in $\sigma$. That is, the sign is positive if the number of pairs $(i,j)$ such that $i < j$ but $\sigma(i) > \sigma(j)$ is even, otherwise odd.

Symmetrized determinants are similar to the ordinary determinants in a few ways, for example, $(\sdet A)^T = \sdet A^T$, Similarly, we can show the following:
\begin{proposition}
    Let $M \in \mat(\alg,p+q,p+q)$ such that $M = \begin{pmatrix}
            A_{p \times p} & 0_{p \times q} \\
            B_{q \times p} & D_{q\times q}
        \end{pmatrix}$. Then, $\sdet M = \sdet M'$ where $M' = \begin{pmatrix}
            A_{p \times p} & 0_{p \times q} \\
            0_{q \times p} & D_{q\times q}
        \end{pmatrix}$.   
\end{proposition}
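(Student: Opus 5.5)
Working straight from the definition~(\ref{eq:sdet}), I will show that every term of $\sdet M$ that uses an entry of the block $B$ must also use an entry of the zero block $0_{p\times q}$, so it vanishes. The terms that survive are then exactly the terms of $\sdet M'$. Write $n = p+q$. Each pair $(\sigma,\tau)\in \Sym_n\times\Sym_n$ contributes $(\sgn\sigma)(\sgn\tau)\, m_{\sigma(1)\tau(1)}\cdots m_{\sigma(n)\tau(n)}$. The set of index pairs $\{(\sigma(k),\tau(k)) : k\in[n]\}$ is the graph of the permutation $\pi = \tau\sigma^{-1}$, i.e.\ the pairs $(i,\pi(i))$, taken in the order prescribed by $\sigma$. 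So the product is nonzero only if $m_{i,\pi(i)}$ is nonzero for every $i$. This holds even though $\alg$ is non-commutative, because a single zero factor kills the whole product regardless of order.

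The key step is the usual block-triangular counting argument applied to $\pi$. Suppose the term is not forced to vanish by the zero block. Then $\pi(i)\le p$ for every row $i\le p$, so $\pi$ maps $[p]$ into $[p]$. Since $\pi$ is a bijection, it maps $[p]$ onto $[p]$, and hence it maps $\{p+1,\dots,n\}$ onto itself. Consequently no factor $m_{i,\pi(i)}$ with $i>p$ and $\pi(i)\le p$ occurs. In other words, every possibly nonzero term uses no entry of $B$.

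It follows that in every term the factors coming from the bottom-left block either do not appear or appear alongside a zero factor. Replacing $B$ by $0$ therefore changes no term: each summand of $\sdet M$ equals the corresponding summand of $\sdet M'$ for the same pair $(\sigma,\tau)$. Terms whose $\pi$ does not preserve $[p]$ are zero in both $M$ and $M'$, since each contains a factor from the upper-right zero block. Terms whose $\pi$ preserves $[p]$ are literally identical products. Summing over all pairs and dividing by $n!$ gives $\sdet M = \sdet M'$.

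There is no real obstacle here. The only point needing care is that the argument compares term-by-term with the same multiplication order, and never relies on commutativity to factor the product into an $A$-part times a $D$-part. That factorization is false for $\sdet$ in general, which is exactly why the statement reduces to $M'$ rather than to $\sdet A\cdot\sdet D$.
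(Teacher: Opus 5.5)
Your proof is correct and is essentially the paper's argument. The paper notes that a term using an entry of $B$ leaves only $p-1$ columns of $[p]$ for the $p$ rows of $[p]$, which forces a factor from the zero block $0_{p\times q}$; this is the contrapositive of your observation that $\pi=\tau\sigma^{-1}$ must map $[p]$ onto itself, and your explicit term-by-term comparison (same $(\sigma,\tau)$, same multiplication order) is exactly the care the non-commutative setting requires.
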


\begin{proof}
    We know that any monomial in $\sdet M$ has exactly one elt from each row and column. Lets look at some monomial $m \in \sdet M$. If $m$ contains some term from $B$, then it can only index $p-1$ columns of $A$ and hence only $p-1$ rows of $A$, and hence one term would have to appear from a row in $0_{p \times q}$.
\end{proof}

However, they also have many differences as compared to the ordinary determinants which arise primarily due to non-commutativity. Most prominently, it can be easily verified that $\sdet AB \neq \sdet A \ \sdet B$, i.e, $\sdet$ is not \emph{multiplicative}. Hence, even simple decompositions like the \emph{Schur} complement cease to follow. Further, the standard method to compute determinants efficiently relies on row and column operations which inherently use the invertibility of field elements, and may not hold for elements of an arbitrary algebra. Similarly, other division-free combinatorial methods for computing the determinant like the \emph{Clow-sequence} method describe by Mahajan and Vinay \cite{mahajan-vinay} relies on the commutativity of the matrix entries. Despite these challenges, using clever identities involving \emph{mixed discriminants} \cite{barvinok1997computing}, Barvinok gave an efficient algorithm to compute the symmetrized determinants over algebras of a constant dimension.

\begin{theorem}(\cite[Theorem 3.6]{Barvinok2000NewPE})\label{thm:sdet-efficient-algo}
    Let $r$ be the dimension of the algebra. For a fixed $r$, given an $n \times n$ 
matrix $A \in \mat(\alg, n,n)$, we can compute $\sdet A \in \alg$ using $O(n^{r+3})$ arithmetic operations. 
\end{theorem}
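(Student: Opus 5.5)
The plan follows Barvinok's route through mixed discriminants: reduce $\sdet A$ to a polynomial identity in $n\times n$ matrices over $\F$, then recover the algebra-valued answer by interpolation. Fix a basis $e_1,\dots,e_r$ of $\alg$ and write each entry as $a_{ij}=\sum_{k=1}^r a^{(k)}_{ij} e_k$, so that $A=\sum_k A_k e_k$ with $A_k\in\mat(\F,n,n)$. Expanding~(\ref{eq:sdet}) multilinearly in the entries, a product $m_{\sigma(1)\tau(1)}\cdots m_{\sigma(n)\tau(n)}$ becomes a sum over words $(k_1,\dots,k_n)\in[r]^n$ of $e_{k_1}\cdots e_{k_n}$. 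The symmetrization over $(\sigma,\tau)$ means the scalar coefficient of a word depends only on its content. If the word uses basis element $k$ exactly $c_k$ times, the coefficient is, up to the normalization $1/n!$ and multinomial factors, the mixed discriminant $D(A_{k_1},\dots,A_{k_n})$. Grouping words by content $c=(c_1,\dots,c_r)$, with $\sum_k c_k=n$, gives
\[
\sdet A=\sum_{c} \frac{1}{c_1!\cdots c_r!}\, D(\underbrace{A_1,\dots,A_1}_{c_1},\dots,\underbrace{A_r,\dots,A_r}_{c_r})\cdot S_c ,
\]
where $S_c\in\alg$ is the sum of $e_{k_1}\cdots e_{k_n}$ over all words of content $c$, possibly with a constant depending only on $c$.

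The first step is to check this identity carefully. The key point is that averaging over $\sigma$ with the relabelling $\tau\mapsto\tau\circ\pi$ makes the coefficient symmetric in the positions of the word. After that it is a direct comparison with the definition
\[
D(B_1,\dots,B_n)=\frac{1}{n!}\sum_{\sigma,\tau}\sgn\sigma\,\sgn\tau\prod_i (B_i)_{\sigma(i)\tau(i)} .
\]

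The second step computes the mixed discriminants. Since $\det\bigl(\sum_k t_kA_k\bigr)=\sum_c \binom{n}{c}D(A_1^{c_1},\dots,A_r^{c_r})\,t^c$ is a polynomial of degree $n$ in $r$ variables, it has $\binom{n+r-1}{r-1}=O(n^{r-1})$ coefficients. We evaluate it at that many points of a grid on which interpolation is unisolvent, for instance a simplex grid. Each evaluation is an $n\times n$ determinant over $\F$, costing $O(n^3)$ by Gaussian elimination, which is allowed since $\F$ has characteristic $0$. Interpolation then recovers all coefficients within the same budget, giving $O(n^{r+2})$ so far.

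The third step computes the elements $S_c$ in $\alg$. Working in the commutative extension $\alg[t_1,\dots,t_r]$, we have $(t_1e_1+\dots+t_re_r)^n=\sum_c t^c S_c$. So we compute the powers $(\sum_k t_ke_k)^j$ for $j=1,\dots,n$ iteratively, keeping coefficients in $\alg$ via the structure constants. Each step multiplies $O(n^{r-1})$ coefficients by $r$ terms at $O(r^2)$ cost, for $O(n^{r})$ total over all steps. Combining with the $D$-values costs $O(n^{r-1})$ more, so the whole algorithm fits within $O(n^{r+3})$.

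I expect the main obstacle to be the first step: verifying that the non-commutative symmetrization really yields the symmetric sums $S_c$ with exactly mixed-discriminant coefficients, and tracking the normalizing constants. The remaining steps are routine counting.
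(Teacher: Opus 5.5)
Your proposal is correct and is essentially Barvinok's mixed-discriminant argument, which the paper only cites without reproducing. One constant needs fixing, though: with your normalization of $D$, and with $S_c$ summing over distinct words, the coefficient of each word $(k_1,\dots,k_n)$ is exactly $D(A_{k_1},\dots,A_{k_n})$. So the identity is $\sdet A=\sum_c D(A_1^{c_1},\dots,A_r^{c_r})\,S_c$ with no factor $1/(c_1!\cdots c_r!)$; already for $\alg=\F$, $n=2$ that factor would return $\tfrac12\det A$.

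You should also make the interpolation step explicit. For example, evaluate the dehomogenized $\det(t_1A_1+\dots+t_{r-1}A_{r-1}+A_r)$ on the tensor grid $\{0,\dots,n\}^{r-1}$ and interpolate one coordinate at a time, costing $O(n^{r+2})$ for the determinants plus $O(n^{r})$ for the interpolation. A general linear solve in $\Theta(n^{r-1})$ unknowns would cost $\Theta(n^{3r-3})$, which exceeds $O(n^{r+3})$ once $r\ge 4$.
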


\section{Principal Minor Expansion Property}
In this section, we show that the principal minor expansion property also holds for the symmetrized determinants. But before we prove that, we will first prove a property about a pair of permutations that will be useful later.

\begin{lemma}\label{lem:mixed-inversion}
    Let $\sigma, \tau \in \Sym_n$. Let $S$ be a fixed subset of $[n]$. Suppose, $\sigma(i) = \tau(i)$ for all elements in $C \coloneq [n]\setminus S$. Let $d_{\sigma}$ count the number of inversions in $\sigma,$ such that $i \in S, j \in C$ and we have: either $i <j$ $\&$ $\sigma(i)>\sigma(j)$, or $i >j$ $\&$ $\sigma(i) < \sigma(j)$. Then, $d_{\sigma} = d_{\tau} \pmod 2$.  
\end{lemma}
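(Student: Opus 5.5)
Fix $j \in C$ and count the mixed inversions involving $j$ separately for $\sigma$ and for $\tau$. Since $\sigma$ and $\tau$ agree on $C$, they take the same set of values $V \coloneq \sigma(C) = \tau(C)$ there. Hence they map $S$ onto the same complementary set $W \coloneq [n]\setminus V$, and only the way $S$ is matched to $W$ differs.

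For a fixed $j \in C$, let $v = \sigma(j) = \tau(j)$. The pair $\{i,j\}$ with $i\in S$ is an inversion for $\sigma$ exactly when the predicate $[i<j]$ differs from $[\sigma(i)>v]$. Summing over $i\in S$ and working modulo $2$, the number of such $i$ is congruent to
\[
\sum_{i\in S}[i<j] + \sum_{i\in S}[\sigma(i)>v] \pmod 2 .
\]
This uses $[a\neq b] \equiv [a]+[b] \pmod 2$. The first sum does not depend on the permutation at all. The second sum equals $|\{w\in W : w>v\}|$, because $\sigma$ restricted to $S$ is a bijection onto $W$. This quantity is also independent of the permutation, since $\tau$ restricted to $S$ is a bijection onto the same set $W$.

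So the number of mixed inversions involving $j$ has the same parity for $\sigma$ and $\tau$. Summing over $j\in C$ gives $d_\sigma \equiv d_\tau \pmod 2$. Each mixed pair contains exactly one element of $C$, so it is counted exactly once in this sum.

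The only step needing care is the parity identity $[a\neq b]\equiv[a]+[b]$ together with the observation that $\sigma(S)=\tau(S)$. Once those are in place, the rest is bookkeeping.
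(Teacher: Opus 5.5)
Your argument is essentially the paper's proof. Both fix $c\in C$, count the mixed inversions through $c$, and show that their parity depends only on where $c$ sits among $S$ and where $\sigma(c)$ sits among $\sigma(S)=\tau(S)$. The paper gets this from the exact identity $B_\sigma(c)=L_c+L'_c-2x_\sigma(c)$; you get it from the parity identity $[a\neq b]\equiv[a]+[b]\pmod 2$.

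There is one slip, but it does not affect the conclusion. For $i\in S$, the pair $\{i,j\}$ is a mixed inversion exactly when $[i<j]$ \emph{equals} $[\sigma(i)>v]$; for instance, $i<j$ with $\sigma(i)>v$ is an inversion. So your congruence actually gives the parity of $|S|$ minus the number of mixed inversions through $j$. Since $|S|$ does not depend on the permutation, the argument still goes through. If you replace $[\sigma(i)>v]$ by $[\sigma(i)<v]$, your count becomes exactly right, with parity $\sum_{i\in S}[i<j]+|\{w\in W: w<v\}|$. This is precisely the paper's $L_c+L'_c$.
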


\begin{proof}
     For each $c\in C$, let $B_\sigma(c)$ denote the number of ``mixed'' inversions involving $c$ under $\sigma$, i.e. $ B_\sigma(c) = |\{s\in S : s<c,\ \sigma(s)>\sigma(c)\} |+|\{s\in S : s>c,\ \sigma(c)>\sigma(s)\}|$. Then, $d_{\sigma} = \sum_{c\in C} B_\sigma(c)$. Defining $B_{\tau}(c)$ analogously, we have $d_{\tau} = \sum_{c\in C} B_\tau(c)$. 

     Since $\sigma(i) = \tau(i)$ for all $i \in C$, let $\sigma(S)=\tau(S) = G \subseteq [n]$. Fix $c\in C$. Since $\sigma(S)=G$ and $\sigma$ is a permutation, we have $\sigma(c)\notin G$. Let $L_c$ denote the number of elements to the left of $c$ in $S$, and $L'_c$ denote the number of elements to the left of $\sigma(c)$ in $G$. That is, $L_c := |\{s\in S : s<c\}|$ and $L'_c := |\{g\in G : g<\sigma(c)\}|$. Notice that $L_c$ and $L'_c$ are independent of the permutation as $\sigma(c) = \tau(c)$.
     
     Let $x_{\sigma}(c)$ denote the number of non-inversion elements in $S$ to the left of $c$. Formally, $x_\sigma(c):=|\{s\in S : s<c,\ \sigma(s)<\sigma(c)\}|$. Then among the $L_c$ elements of $S$ lying to the left of $c$, exactly $x_\sigma(c)$ have image $<\sigma(c)$, so
\[
|\{s\in S : s<c,\ \sigma(s)>v\}|=L_c-x_\sigma(c).
\]
On the other hand, since $\sigma(S)=G$, exactly $L'_c$ elements of $S$ are mapped by $\sigma$ to values $<\sigma(c)$. Of these, $x_\sigma(c)$ lie to the left of $c$, so
\[
|\{s\in S : s>c,\ \sigma(s)<\sigma(c)\}|=L'_c-x_\sigma(c).
\]
Therefore
\[
B_\sigma(c)
=
(L_c-x_\sigma(c))+(L'_c-x_\sigma(c))
=
L_c+L'_c-2x_\sigma(c),
\]
and hence $B_\sigma(c)\equiv L_c+L'_c \pmod 2$. Exactly the same argument applied to $\tau$ gives $B_\tau(c)\equiv L_c+L'_c \pmod 2$. Thus $B_\sigma(c)\equiv B_\tau(c)\pmod 2$ for every $c \in C$. Hence, $d_{\sigma} = \sum_{c\in C} B_\sigma(c) \equiv
\sum_{c\in C} B_\tau(c) \pmod 2 = d_{\tau} \pmod 2$.
\end{proof}
We now move on towards proving that the symmetrized determinants also follow the principal minor expansion property followed by the determinants (see Proposition~\ref{prop:pme}).

\begin{theorem}[Principal minor expansion property for $\sdet$]\label{thm:pma-expansion}
Let $A \in \mat(\alg, n,n)$ and $I$ be the identity matrix in $\mat(\alg,n,n)$. Then,
\[
\sdet(A+I) = \sum_{S \subseteq [n]} \sdet(A_S)
\]
where $A_S$ is the principal submatrix of $A$ formed by taking only those rows and columns indexed by elements of some subset $S \subseteq [n]$.
\end{theorem}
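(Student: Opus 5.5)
We expand $\sdet(A+I)$ directly from the definition (\ref{eq:sdet}) and regroup the terms by the set of positions at which the factor $A$ is chosen. Write $(A+I)_{ij} = a_{ij} + \delta_{ij}1_{\alg}$. For a pair $(\sigma,\tau)$, the product $\prod_{k=1}^n (a_{\sigma(k)\tau(k)} + \delta_{\sigma(k)\tau(k)})$ expands into a sum over subsets $K\subseteq[n]$ of positions where the $a$-factor is taken. At the remaining positions $k\notin K$ we need $\sigma(k)=\tau(k)$, and there the factor is $1_{\alg}$. Since $1_{\alg}$ is central, deleting these factors leaves the ordered product $\prod_{k\in K} a_{\sigma(k)\tau(k)}$, with the order inherited from the positions. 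Hence
\[
\sdet(A+I) = \frac{1}{n!}\sum_{K\subseteq[n]}\ \sum_{\substack{(\sigma,\tau):\ \sigma(k)=\tau(k)\ \forall k\notin K}} \sgn\sigma\,\sgn\tau \prod_{k\in K} a_{\sigma(k)\tau(k)} .
\]
The next step is to reparametrize by the row set $S=\sigma(K)=\tau(K)$, which is the index set of the principal minor.

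Fix $K$ with $|K|=m$, and let $S\subseteq[n]$ have size $m$. The data $(\sigma,\tau)$ with $\sigma(K)=\tau(K)=S$ and $\sigma|_{C}=\tau|_{C}$ (where $C=[n]\setminus K$) decompose as follows. There is a common bijection $\rho: C\to[n]\setminus S$, giving $(n-m)!$ choices. There are also two bijections $K\to S$. Identifying $K$ and $S$ with $[m]$ order-preservingly, these become a pair $(\sigma',\tau')\in\Sym_m\times\Sym_m$. The ordered product $\prod_{k\in K}a_{\sigma(k)\tau(k)}$ is then exactly the term $a^S_{\sigma'(1)\tau'(1)}\cdots a^S_{\sigma'(m)\tau'(m)}$ of $\sdet(A_S)$ (before the $1/m!$ factor).

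The sign claim is $\sgn\sigma\,\sgn\tau=\sgn\sigma'\,\sgn\tau'$. The inversions of $\sigma$ split into three kinds: those inside $K$ (counted by $\sgn\sigma'$), those inside $C$ (identical for $\sigma$ and $\tau$, so they cancel in the product), and the mixed ones between $K$ and $C$. For the mixed ones we use Lemma~\ref{lem:mixed-inversion}, applied with its $S$ taken to be our $K$. That lemma gives $d_\sigma\equiv d_\tau \pmod 2$, so the mixed contributions also cancel. This sign bookkeeping is the main point requiring care, but the lemma does the work.

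Summing, the coefficient of $\sdet(A_S)$ is
\[
\frac{1}{n!}\sum_{|K|=m}(n-m)!\, m! = \frac{1}{n!}\binom{n}{m}(n-m)!\,m! = 1.
\]
Indeed, for each $K$ of size $m$ the contribution is $(n-m)!\sum_{\sigma',\tau'}\sgn\sigma'\sgn\tau'(\cdots) = (n-m)!\,m!\,\sdet(A_S)$, and there are $\binom{n}{m}$ choices of $K$. This gives $\sdet(A+I)=\sum_S\sdet(A_S)$, where $S=\emptyset$ contributes $\sdet$ of the empty matrix, namely $1_{\alg}$, coming from $K=\emptyset$. One must also check that the decomposition $(\sigma,\tau)\leftrightarrow(K,S,\rho,\sigma',\tau')$ is a bijection, which is immediate.
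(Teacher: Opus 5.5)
Your proposal is correct and follows essentially the same route as the paper: you expand the product, keep only pairs agreeing off the chosen position set, and reduce the sign to the $\Sym_m\times\Sym_m$ sign via Lemma~\ref{lem:mixed-inversion} plus the within-$C$ cancellation. You then count $(n-m)!\,m!\,\binom{n}{m}/n!=1$, and your explicit bijection $(\sigma,\tau)\leftrightarrow(K,S,\rho,\sigma',\tau')$ states the paper's reindexing step more cleanly than the paper does.
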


\begin{proof}
    Let $A = (a_{ij})$ and $I = (\Delta_{ij})$, where $\Delta_{ij} = 0_{\alg}$ for $i \neq j$, and $\Delta_{ii} = 1_{\alg}$. Using (\ref{eq:sdet}), we have
    \begin{equation}\label{eq:sdet-pf-1}
        \sdet (A + I) = {1 \over n!} \sum_{(\sigma, \tau) \in \Sym_n \times \Sym_n} 
    (\sgn \sigma)(\sgn \tau) \prod_{j=1}^n  (a_{\sigma(j) \tau(j)} + \Delta_{\sigma(j) \tau(j)})
    \end{equation}
    Now, $0_{\alg} \text{ and } 1_{\alg}$ both commute with all elements in $\alg$. Hence, we can rewrite
    \[
    \prod_{j=1}^n  (a_{\sigma(j) \tau(j)} + \Delta_{\sigma(j) \tau(j)}) = \sum_{S \subseteq [n]} \Bigg( \prod_{i \in S} a_{\sigma(i) \tau(i)} \prod_{i \notin S} \Delta_{\sigma(i) \tau(i)}\Bigg)
    \]
    Replacing in (\ref{eq:sdet-pf-1}), this gives
    \begin{equation}\label{eq:sdet-pf-2}
    \begin{split}
        \sdet (A + I) &= {1 \over n!} \sum_{(\sigma, \tau) \in \Sym_n \times \Sym_n} (\sgn \sigma)(\sgn \tau) \sum_{S \subseteq [n]} \Bigg( \prod_{i \in S} a_{\sigma(i) \tau(i)} \prod_{i \notin S} \Delta_{\sigma(i) \tau(i)}\Bigg) \\
        &= \sum_{S \subseteq [n]} {1 \over n!} \sum_{(\sigma, \tau) \in \Sym_n \times \Sym_n} (\sgn \sigma)(\sgn \tau) \Bigg( \prod_{i \in S} a_{\sigma(i) \tau(i)} \prod_{i \notin S} \Delta_{\sigma(i) \tau(i)}\Bigg)
    \end{split}
    \end{equation}
    
    Now, we know that $\Delta_{i,j} = 0_{\alg}$ whenever $i \neq j$. Hence, in the above summand, only those terms with $\sigma(i) = \tau(i)$ will survive. Essentially, this means that $\tau$ is a permutation that matches $\sigma$ on all elements not in $S$. Thus, we have
    \[
    \sum_{(\sigma, \tau) \in \Sym_n \times \Sym_n} (\sgn \sigma)(\sgn \tau) \Bigg( \prod_{i \in S} a_{\sigma(i) \tau(i)} \prod_{i \notin S} \Delta_{\sigma(i) \tau(i)}\Bigg) = \sum_{\substack{(\sigma, \tau) \in \Sym_n \times \Sym_n\\\sigma(i) = \tau(i) \: \text{ for all } i \notin S}} (\sgn \sigma)(\sgn \tau) \Bigg( \prod_{i \in S} a_{\sigma(i) \tau(i)} \Bigg)
    \]
    Replacing this in (\ref{eq:sdet-pf-2}), we get
    \begin{equation}\label{eq:sdet-pf-3}
    \begin{split}
        \sdet (A + I) 
        &= \sum_{S \subseteq [n]} {1 \over n!} \sum_{\substack{(\sigma, \tau) \in \Sym_n \times \Sym_n\\\sigma(i) = \tau(i) \: \text{ for all } i \notin S}} (\sgn \sigma)(\sgn \tau) \Bigg( \prod_{i \in S} a_{\sigma(i) \tau(i)} \Bigg)
    \end{split}
    \end{equation}
        
    Let's fix $S \subseteq [n]$ such that $|S| = K$. 
    $\sgn \sigma$ is the number of inversions in $\sigma$. Those inversions can be decomposed into three parts: 
    \begin{enumerate}
        \item Inversions between elements in $S$, i.e, $i<j \in S$ but $\sigma(i) > \sigma(j)$. Let $I_{\sigma}^1$ count such inversions.
        \item Inversions between elements in $C\coloneqq [n] \setminus S$, i.e, $i < j \in C$ but $\sigma(i)>\sigma(j)$. Let $I_{\sigma}^2$ count such inversions.
        \item ``Mixed'' inversions between $S$ and $C$. Let $i \in S, j \in C$, then we have either $i < j, \sigma(i) > \sigma(j)$ or $i>j, \sigma(i)<\sigma(j)$. Let $I_{\sigma}^3$ count such inversions.
    \end{enumerate}
    Since $\sigma(i) = \tau(i)$ for all $i \in C$, it is easy to see that $I_{\sigma}^2 = I_{\tau}^2$. Further, by Lemma~\ref{lem:mixed-inversion}, we know that $I_{\sigma}^3 = I_{\tau}^3 \pmod 2$. Hence, \[
    \sgn \sigma \sgn \tau = (-1)^{I_{\sigma}^1+I_{\sigma}^2+I_{\sigma}^3} \cdot  (-1)^{I_{\tau}^1 + I_{\tau}^2 + I_{\tau}^3} = (-1)^{I_{\sigma}^1+I_{\tau}^1}
    \]
    Essentially, the only contribution in the sign comes from permutations within $S$. Hence, given a $S$ and a pair $(\sigma, \tau)$, there are $(n-k)!$ pairs $(\hat \sigma,\hat \tau )$ that mimic $(\sigma,\tau)$ on $S$ and are free to do anything on $C$. Since, sign only depends on behaviour on $S$, we have $\sgn \hat \sigma \sgn \hat \tau = \sgn \sigma \sgn \tau$. This gives
    \begin{equation}\label{eq:sdet-pf-4}
    \begin{split}
        \sdet (A + I) 
        &= \sum_{S \subseteq [n]} {1 \over n!}(n-k)! \sum_{\substack{(\sigma, \tau) \in \Sym_n \times \Sym_n\\ \sigma(i) = \tau(i) = i \: \text{ for all } i \notin S}} (-1)^{I_{\sigma}^1+I_{\tau}^1} \Bigg( \prod_{i \in S} a_{\sigma(i) \tau(i)} \Bigg)
    \end{split}
    \end{equation}
    Now, $\sigma,\tau$ are permutations in $\Sym_n$ that are fixed on $n-k$ elements not in $S$. But there are $n \choose k$ subsets $G$ of $[n]$ to which $\sigma, \tau$ can move the elements. Note that $\sigma(S) = \tau(S)$ since they match on all elements not in $S$. In particular, if $G$ is a subset of size $k$, then there are $k!^2$ pairs of $(\sigma,\tau) \in \Sym_n^2$ which are fixed on elements not in $S$ and have $\sigma(S) = \tau(S) = G$. Further, they can now be considered as permutations in $\Sym_k$ by simply using a map $\phi: G \rightarrow S$ such that $\phi(i) < \phi(j)$ for all $i \neq j \in G$. Since, the sign only depends on the inversions in $S$, these permutations considered as $k$-permutations will have the same sign. Finally, in (\ref{eq:sdet-pf-3}), the monomial only has terms $\prod_{i \in S} a_{\sigma(i)\tau(i)}$. This means that we are only taking a product over the elements of $A_{G}$ where $A_{G}$ is the submatrix of $A$ formed by taking only those rows and columns indexed by elements of $G$. Hence, we can rewrite \[
    \sum_{\substack{(\sigma, \tau) \in \Sym_n \times \Sym_n\\ \sigma(i) = \tau(i) = i \: \text{ for all } i \notin S}} (-1)^{I_{\sigma}^1+I_{\tau}^1} \Bigg( \prod_{i \in S} a_{\sigma(i) \tau(i)} \Bigg) = k!\sum_{\substack{G \subseteq [n]\\ |G|=k}} \frac{1}{k!} \sum_{\substack{(\sigma, \tau) \in Sym_k \times Sym_k\\ \sigma(S) = \tau(S) = G}} (-1)^{I_{\sigma}^1+I_{\tau}^1} \Bigg( \prod_{i \in S} a_{\sigma(i) \tau(i)} \Bigg)
    \] 
    \[
    = k! \sum_{\substack{G \subseteq [n]\\ |G|=k}} \sdet A_G
    \]
    Putting this in (\ref{eq:sdet-pf-4}) and using $k = |S|$, we get 
    \begin{equation}\label{eq:sdet-pf-5}
        \sdet (A+I) = \sum_{S \subseteq [n]} {1 \over n!}(n-|S|)! |S|! \sum_{\substack{G \subseteq [n]\\ |G|=|S|}} \sdet A_G = \sum_{S \subseteq [n]} \sum_{\substack{G \subseteq [n]\\ |G|=|S|}} \frac{1}{{n \choose |S|}} \sdet A_G
    \end{equation}
    But this simply means that given any $|G|$ of size $|S|$, we get $n \choose |S|$ summands including $\sdet A_G$ because every $S \subseteq [n]$ of that size contributes one such term. Hence,\[
    \sdet (A+I) = \sum_{G \subseteq [n]} \sdet A_G
    \]
    \vspace{-2em}
\end{proof}
Hence, using this result, over finite-dimensional algebras we can compute the analogue of the normalization constant for $\mathsf{DPP}$s in polynomial time.

\section{$\sharpP$-Hardness of symmetrized determinants}
In this section, we prove the $\sharpP$-hardness of the symmetrized determinants over a fixed algebra with dimension polynomial in the size of the input. This is contrast with the known polytime algorithm for algebras of a constant dimension shown in \cite{Barvinok2000NewPE}, and aligns more closely with the $\sharpP$-hardness of Cayley determinant \cite{blaser2015noncommutativity}.

One approach to show this hardness is to use the exterior algebra over $2n$ generators (where the input matrix $A \in \mat(\alg,n,n)$). This algebra allows us to reduce computing the symmetrized determinants to computing the number of perfect matchings over a graph which is a known $\sharpP$-complete problem \cite{arora2009computational}. However, the issue is that the algebra blows up despite having only $2n$ generators, and in fact, $dim(\alg) = 2^{2n}$. Thus, we only get $\sharpP$-Hardness under a sparse encoding. In fact, number of perfect matchings of a bipartite graph can be counted in $\tilde O(2^{n/2})$ time \cite{bjorklund2012counting} and is trivially in $\mathsf{P}$ under the dense encoding. Instead, we use the $\sharpP$-completeness of counting the number of Hamiltonian Cycles \cite{arora2009computational} to get our desired reduction. 

\subsection{Defining the algebra}
To this end, we define an associative algebra $\alg$ over a field $\mathbb{F}$ such that the dimension of the algebra is $ \leq 2(n+1)n^2$, with the basis being elements of the form $e=t^{\alpha} s^{\beta} u_{i,j}$ where $\alpha \in \{0,1\}, \beta \in \{1,...,n\}$ and $1\leq i,j \leq n$. The elements of the algebra follow these rules:
\begin{enumerate}
    \item \[
            u_{i,j}\cdot u_{k,l} = \begin{cases}
                    0 & \text{if } j \neq k, \\
                    u_{i,l} & \text{otherwise}.
            \end{cases}
        \]
    \item $t^2 = t$
    \item $s\cdot u_{i,j} = u_{i,j} \cdot s$ and $t\cdot u_{i,j} = u_{i,j} \cdot t$
    \item $s\cdot t = 0$ and $t \cdot s \neq 0$.
\end{enumerate}
We define the set of generators as $\gen =\{t, \ s\cdot u_{1,2},s\cdot u_{1,3},...,s\cdot u_{1,n}, \ s\cdot u_{2,1}, s\cdot u_{2,3}, \cdots, s\cdot u_{2,n}, \  \cdots,s\cdot u_{n,1}\cdots, s\cdot u_{n,n-1}\}$. Notice that $(s \cdot u_{i,j}) \cdot t = u_{i,j} \cdot (s \cdot t) = 0$ and $t \cdot (s\cdot u_{i,j}) = tsu_{i,j}$. Further, $su_{i,j}\cdot su_{j,k} = su_{i,l}$. Hence, it is easy to see that $\langle \gen \rangle \subseteq \alg$. In fact, this can be realized using a matrix algebra.

\subsubsection{Instantiation of the algebra} Let the base algebra be $\alg = \mat(\Q,n+1,n+1) \otimes \mat(\Q,n,n)$. Then, to satisfy the above properties, we can set the elements as following: \begin{itemize}
    \item $u_{i,j} = I_{n+1} \otimes e_{i,j}$
    \item $s = \big( \sum_{i=1}^n e_{i,i+1} \big) \otimes I_n$
    \item $t = e_{1,1} \otimes I_n$
\end{itemize}
Here $e_{i,j}$ is has $1$ only in the position $i,j$ and zeros everywhere else. It is easy to verify this instantiation satisfies all the properties we desired in our algebra.

\subsection{Towards reduction}
We now prove our main theorem regarding the hardness of the symmetrized determinant.
\begin{theorem}[$\sharpP$-hardness for $\sdet$]\label{thm:sharpP-hardness}
There exists an algebra $\alg$ of dimension $\text{poly}(n)$ such that computing $\sdet A$ where $A \in \mat(\alg,n,n)$ is $\sharpP$-Hard.
\end{theorem}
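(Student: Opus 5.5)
We reduce from counting Hamiltonian cycles in a directed graph $G=([n],E)$ without self-loops, which is $\sharpP$-complete \cite{arora2009computational}. We use the algebra $\alg=\mat(\Q,n+1,n+1)\otimes\mat(\Q,n,n)$ with the elements $s$ and $u_{i,j}$ from the instantiation above; its dimension is $(n+1)^2n^2=\text{poly}(n)$. We will not need $t$. Define $A\in\mat(\alg,n,n)$ by
\[
a_{ij}=s\cdot u_{i,j}\ \text{ if } (i,j)\in E,\qquad a_{ij}=0 \text{ otherwise},
\]
and in particular $a_{ii}=0$. The plan is to show that
\[
\sdet A=\frac{(-1)^{n-1}}{(n-1)!}\,\mathrm{HC}(G)\; s^n\sum_{v}u_{v,v},
\]
where $\mathrm{HC}(G)$ is the number of directed Hamiltonian cycles of $G$. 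Since $s^n\sum_v u_{v,v}=e_{1,n+1}\otimes I_n\neq 0$, $\mathrm{HC}(G)$ can then be read off from a single entry of $\sdet A$.

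\textbf{Reparametrising the sum.} In (\ref{eq:sdet}), substitute $\pi=\tau\circ\sigma^{-1}$, so that $\tau=\pi\circ\sigma$. The term for $(\sigma,\tau)$ becomes
\[
\sgn(\pi)\, a_{\sigma(1),\pi(\sigma(1))}\cdots a_{\sigma(n),\pi(\sigma(n))}.
\]
This is a product of the $n$ entries $a_{v,\pi(v)}$ of the cycle cover $\pi$, taken in the order in which $\sigma$ lists the vertices. The sign factor is correct because $(\sgn\sigma)(\sgn\tau)=\sgn\pi\,(\sgn\sigma)^2=\sgn\pi$. Consequently
\[
\sdet A=\frac1{n!}\sum_{\pi}\sgn\pi\sum_{\sigma}\prod_{j}a_{\sigma(j),\pi(\sigma(j))}.
\]

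\textbf{Which terms survive.} The matrix-unit rule $u_{i,j}u_{k,l}=0$ for $j\neq k$ holds, and $s$ commutes with every $u_{i,j}$. Hence the ordered product of the edges $(\sigma(j),\pi(\sigma(j)))$ is nonzero only when these $n$ edges, in this order, form a walk, that is, when $\pi(\sigma(j))=\sigma(j+1)$ for all $j<n$. This is the main step to check carefully.

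\begin{itemize}
\item If $\pi$ has a fixed point, the product contains $a_{ii}=0$ and vanishes.
\item If $\pi$ uses a non-edge, the product vanishes for the same reason.
\item If $\pi$ has at least two cycles, no ordering of its edges is a single walk through all of them, because a walk stays inside one connected cycle.
\end{itemize}

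Thus only Hamiltonian cycles $\pi$ of $G$ contribute. For such a $\pi$, the valid orderings $\sigma$ are exactly the traversals of the cycle starting at some vertex $v=\sigma(1)$. There are $n$ of them, and each gives the product $s^n u_{v,v}$.

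\textbf{Finishing.} Every $n$-cycle $\pi$ has $\sgn\pi=(-1)^{n-1}$. Summing gives
\[
\sdet A=\frac{(-1)^{n-1}}{n!}\,\mathrm{HC}(G)\,s^n\sum_v u_{v,v}=\frac{(-1)^{n-1}}{(n-1)!}\,\mathrm{HC}(G)\,s^n\sum_v u_{v,v},
\]
as claimed. The $n$ from the choices of starting vertex is absorbed by $\sum_v$, so no extra factor of $n$ appears.

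In the instantiation, $s=\sum_{i=1}^n e_{i,i+1}$ on $(n+1)$-dimensional space, so $s^n=e_{1,n+1}\neq0$ and $s^n\sum_v u_{v,v}=e_{1,n+1}\otimes I_n$. Therefore the $(1,n+1)\otimes(1,1)$ coordinate of $\sdet A$ equals $(-1)^{n-1}\mathrm{HC}(G)/(n-1)!$, and multiplying by $(-1)^{n-1}(n-1)!$ recovers $\mathrm{HC}(G)$.

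The reduction is polynomial time: $A$ has $n^2$ entries, each a $\text{poly}(n)$-size rational matrix. This establishes Theorem~\ref{thm:sharpP-hardness}.
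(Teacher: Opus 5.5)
Your proof is correct and is essentially the paper's argument. It is a reduction from counting Hamiltonian cycles in which the matrix units force each ordered product to be a closed walk, with $\sgn\sigma\,\sgn\tau=\sgn(\tau\circ\sigma^{-1})$. In fact your zero-diagonal, $t$-free matrix is exactly the simplification the paper mentions in the remark after its proof (and uses for Theorem~\ref{thm:vnpac-complete}). It is cleaner than the main proof because it avoids tracking the $t^{n-k}$ prefixes and the shorter cycles.

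There is one arithmetic slip. $\frac{(-1)^{n-1}}{n!}\mathrm{HC}(G)\,s^n\sum_v u_{v,v}$ is not equal to the same expression with $(n-1)!$ in the denominator. So the $(1,n+1)\otimes(1,1)$ coordinate is $(-1)^{n-1}\mathrm{HC}(G)/n!$, and you must multiply by $(-1)^{n-1}n!$; your factor $(-1)^{n-1}(n-1)!$ only recovers $\mathrm{HC}(G)/n$. The $1/(n-1)!$ would appear only after taking the trace over the $\mat(\Q,n,n)$ factor. This does not affect the hardness conclusion.
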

\begin{proof}
    We show the reduction from counting number of Hamiltonian Cycles in an undirected graph using the algebra $\alg$ defined in the previous section. Given an undirected graph $G=(V,E)$ with $|V|=n$. Let an edge $\eps_{ij}=(v_i,v_j) \in E$. Given such a graph $G$, we construct the $n \times n$ adjacency matrix $M_G$ as follows: \begin{itemize}
    \item Every entry of $M_G$ is an element from $\gen \subseteq \alg$. Further $(M_G)_{i,j} = (M_G)_{j,i} = 0$ if there is no edge between $v_i,v_j$. 
    \item If $\eps_{ij}$ exists, then, $(M_G)_{i,j} = s\cdot u_{i,j}$ and $(M_G)_{j,i} = s\cdot u_{j,i}$.    
    \item Set all diagonal entries of $M_G$ as $t$.
\end{itemize}

Recall,
\[
\sdet M_G={1 \over n!} \sum_{(\sigma, \tau) \in \Sym_n \times \Sym_n} 
(\sgn \sigma)(\sgn \tau) \prod_{i=1}^n (M_G)_{\sigma(i), \tau(i)}
\]

Now, consider any term $h = \prod_{i=1}^n (M_G)_{\sigma(i), \tau(i)}$ in $\sdet M_G$. We know that $(M_G)_{\sigma(i), \tau(i)} \in \{0,t,su_{a,b}\}$ by construction. Thus it can only be non-zero if $(M_G)_{\sigma(i), \tau(i)} \neq 0 \text{ for all } i \in [n]$. Further, in $\alg$, $s \cdot t = 0$, and $s$ and $t$ commute with $u_{i,j}$. Hence, if $t$ appears after some $su_{i,j}$ then $h$ becomes $0$. Thus for $h\neq 0$, in $h$ all occurrences of $t$ must be before occurrences of $su_{i,j}$. Wlog, suppose there are $k$ occurrences of $su_{a,b}$ for some $a,b$. Then we must have $n-k$ occurrences of $t$ occurring first, i.e, $\text{ for all } i \in [n-k]$, we have $\sigma(i) = \tau(i)$ as $t$ only appears on the diagonals of $M_G$. Hence, the permutation $\alpha = \tau \circ \sigma^{-1}$ keeps $j$ fixed for  $n-k$ many $j$ $\in [n]$. \[h = t^{n-k} \cdot s u_{\sigma(n-k+1),\tau(n-k+1)} \cdots s u_{\sigma(n),\tau(n)} = t^{n-k}s^k u_{\sigma(n-k+1),\tau(n-k+1)}\cdots u_{\sigma(n),\tau(n)}\] using commutativity of $s$ with $u_{i,j}$. 

Let $\sigma(n-k+j)  = l_j \text{ for all } j \in [k]$, and $\tau(n-k+j)  = r_j \text{ for all } j \in [k]$. Clearly $l_i \neq l_j, r_i \neq r_j$ for $i\neq j$. Thus, we have $h = t^{n-k}s^k u_{l_1,r_1} \cdots u_{l_k,r_k}$ 
Now, by the rules of multiplication of $u_{a,b}$, for $h \neq 0$ it must hold that $r_1 = l_2, r_2 = l_3 ,..., r_{k-1} = l_k$. In other words, $\tau(n-k+j)  = l_{j+1} \text{ for all } j \in [k-1]$. Thus, $n-k+j = \sigma^{-1}(l_j)$ and hence $l_{j+1} = \tau(n-k+j) = \tau \circ \sigma^{-1}(l_j) \text{ for all } j \in [k-1]$. Thus, $\alpha(l_j) = l_{j+1}$ for $j \in [k-1]$. Now, $\alpha(l_k) = \tau \circ \sigma^{-1}(l_k) = \tau (n) = r_k$. But since we know that $\alpha$ keeps $n-k$ other entries fixed, $r_k = l_1$. Hence, $\alpha \in \Sym_n$ is a $k$-cycle given by $\alpha = (l_1\; l_2 \; ... \; l_k)$. Further, we now have \[
h = t^{n-k}s^k u_{l_1,l_1} = \begin{cases}
                    s^n u_{l_1,l_1} & \text{if } k=n, \\
                    ts^k u_{l_1,l_1} & \text{otherwise}.
            \end{cases}
\]
Hence, any non-zero term of $\sdet M_G$ corresponds exactly to a $k$-cycle $\alpha \in \Sym_n$ for some $k$, such that $\alpha = \tau \circ \sigma^{-1}$ and $\sgn \alpha = \sgn \sigma \sgn \tau$. Further $h$ is exactly a basis element. Given a fixed $\alpha$, there are $n!$ pairs $(\sigma, \tau)$ such that $\alpha = \tau \circ \sigma^{-1}$. However, for $h$ to be non-zero, we also need $\sigma(i) = \tau(i) \text{ for all } i \in [n-k]$. Thus, if $\alpha = (l_1 \cdots l_k)$, then $\sigma(i) = \tau(i) = j$ for $j \in [n] \setminus \{l_1,...,l_k\}$. Thus, we can choose the different $j$ in $(n-k)!$ ways for a fixed $\alpha$. Hence, out of the $n!$ pairs, only $(n-k)!$ of them give a non-zero term of the $\sdet$. For the special case when $k=0$, we are simply choosing all diagonal elements and there are $n!$ pairs $(\sigma, \sigma)$ doing that. Further, $k=1$ is not possible as using Equation~(\ref{eq:sdet}), choosing $n-1$ diagonal entries forces us to choose all $n$ diagonal entries. Finally, for a fixed $\alpha = (l_1 \cdots l_k)$, as we saw before, $h \neq 0$ if $\tau(n-k+j) = \sigma(n-k+(j+1))$ and $\tau(n) = \sigma(n-k+1)$. Thus, given $\alpha$, for the above condition to hold, we can choose $\sigma(n-k+1)$ in $k$ ways as a cycle can start from any position. Further, since the graph is undirected, both the cycle and its inverse correspond to the same non-zero terms, and hence having chosen $\sigma(n-k+1)$, we can choose $\sigma(n-k+2)$ in $2$ ways. After this, the cycle gets fixed and there is a single choice for $\sigma(n-k+j)$ for $j \in \{3,...,k\}$. It is easy to see that out of these $2k$ choices, we have exactly two terms $t^{n-k}s^k u_{l_i,l_i}$ for each $i \in [k]$. Hence, \[
\sdet M_G={1 \over n!} \sum_{(\sigma, \tau) \in \Sym_n \times \Sym_n} 
(\sgn \sigma)(\sgn \tau) \prod_{i=1}^n (M_G)_{\sigma(i), \tau(i)}
\]
\[
= 
{1 \over n!} \sum_{\substack{(\sigma, \tau) \in \Sym_n \times \Sym_n \\ \alpha = \tau \circ \sigma^{-1} \text{ is a $k$-cycle}}} 
\sgn \alpha \prod_{i=1}^n (M_G)_{\sigma(i), \tau(i)}
\]
\begin{align}\label{eq:sdet-Alg-reduction}
    = t + {1 \over n!} \sum_{\substack{ l_1 - ... - l_k \text{ is a cycle in $G$} \\ k \in \{2,...,n\}}} (-1)^{k+1} 2(n-k)! \; t^{n-k}s^k (u_{l_1,l_1} + ... + u_{l_k,l_k})  
\end{align}

In particular, the coefficient of $s^n$ is exactly \[{2 \over n!}(-1)^{n+1} (u_{l_1,l_1} + ... + u_{l_n,l_n}) \sum_{l_1 - ... - l_n \text{ is a cycle in $G$}}1\]
Since each $n$-cycle must contain every vertex and every vertex must be present in each $n$-cycle, this means for all the $n$-cycles, $u_{1,1}$ will appear in the coefficient of $s^n$. Thus, the coefficient $C$ of $s^nu_{1,1}$ is \[
\frac{2R}{n!}(-1)^{n+1}
\]
where $R =$ number of $n$-cycles (or Hamiltonian cycles) in $G$. Since, $s^nu_{1,1}$ is a basis elements and we have $\leq 2(n+1)n^2 = B$ basis elements, we can maintain a tuple of size $B$ in $poly(n)$ time to keep track of this coefficients, and use it to calculate the number of Hamiltonian cycles $R = \frac{(-1)^{n+1}Cn!}{2}$. Since, calculating number of Hamiltonian cycles is a well-known $\sharpP$-complete problem \cite{arora2009computational}, this shows that computing $\sdet$ over an algebra of size $O(n^4)$ solves this problem.
\end{proof}
\begin{remark}
    Our reduction counts number of cycles of length $k$ for \emph{each} $k$ by scaling the corresponding basis coefficient with a suitable constant, which might be of independent interest.
    A simpler reduction is to substitute $t=0$ and $s=1$ in the algebra definition. This will give us a smaller algebra which still counts the number of Hamiltonian Cycle.
\end{remark}

\section{$\VNP$-completeness of the symmetrized determinant polynomial}
\subsection{In the non-commutative setting}
In this section, we show that there exists an algebra of dimension polynomial in the input size such that the symmetrized determinants are a complete class of polynomials for $\VNP$ in the associative, non-commutative setting. Originally, algebraic classes like $\VP$ and $\VNP$ were defined only in the associative, commutative setting \cite{valiant1979completeness}. Naturally, the definitions are very amenable to extension to non-commutative settings. In fact, there has been a lot of work on this, see \cite{nisan1991lower,hrubevs2010relationless,arvind2010hardness}. We use the standard definitions for the classes in this setting, following Hrube\v{s} et al \cite{hrubevs2010relationless}. In particular, we focus on $\vpac, \text{ and } \vnpac$, the class of polynomials where the multiplication between the variables is associative, but not commutative. We prove that the $\sdet$ polynomial is complete for the class $\vnpac$ under $c$-reductions which are an analogue of polynomial \emph{Turing} reductions in the algebraic setting, see \cite[Definition 5.14]{burgisser2000completeness}, and is a commonly used notion to show reductions in the algebraic world. Hrube\v{s} et al. \cite[Fact 1]{hrubevs2010relationless} showed that a polynomial family $\{f_n\}$ is in $\vnpac$ if given a monomial, its coefficient can be evaluated in polynomial time. Hence, we can conclude the following.
\begin{proposition}\label{prop:sdet-in-vnpac}
    $\sdet$ is in $\vnpac$.
\end{proposition}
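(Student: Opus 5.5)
Following the criterion of Hrube\v{s} et al.\ \cite[Fact 1]{hrubevs2010relationless} quoted just above, I would show that for every non-commutative monomial, its coefficient in the $\sdet$ polynomial can be computed in time polynomial in $n$. I would treat $\sdet$ of the $n\times n$ matrix $X=(x_{ij})$ of non-commuting variables as a polynomial in the free algebra, so that (\ref{eq:sdet}) reads
\[
\sdet X = \frac{1}{n!}\sum_{(\sigma,\tau)\in\Sym_n\times\Sym_n}(\sgn\sigma)(\sgn\tau)\, x_{\sigma(1)\tau(1)}\cdots x_{\sigma(n)\tau(n)}.
\]
Every monomial occurring here is a word $w = x_{a_1b_1}x_{a_2b_2}\cdots x_{a_nb_n}$ of length exactly $n$.

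Fix such a word $w$. The pair $(\sigma,\tau)$ yields $w$ exactly when $\sigma(i)=a_i$ and $\tau(i)=b_i$ for all $i$. So $w$ has a nonzero contribution only if $(a_1,\dots,a_n)$ and $(b_1,\dots,b_n)$ are both permutations of $[n]$, and in that case the pair $(\sigma,\tau)$ is uniquely determined by $w$. Hence the coefficient of $w$ is $\frac{1}{n!}\sgn(\sigma)\sgn(\tau)$ if both index sequences are permutations, and $0$ otherwise.

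The algorithm is then direct. First check whether the length is $n$ and whether both index sequences are permutations, which takes $O(n)$ time. Then compute the two signs by counting inversions, which takes $O(n^2)$ time. Finally output $\pm 1/n!$, which has a polynomial-size bit description. By the cited fact, this gives $\sdet\in\vnpac$.

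The only step needing care is matching the exact form of the membership criterion. Fact~1 may require, for instance, integer coefficients computable by a $\sharpP$/$\mathsf{GapP}$ function, or a polynomially bounded degree. The degree is $n$, so that part is fine. The factor $1/n!$ is a constant, so if needed I would prove $n!\cdot\sdet\in\vnpac$ and rescale, since $\vnpac$ is closed under multiplication by field constants in characteristic $0$. Alternatively, I could write the polynomial directly as an exponential sum over Boolean encodings of $(\sigma,\tau)$: a $\vpac$ circuit checks the permutation conditions, computes the sign product, and builds the ordered product $\prod_i \sum_{a,b} [\sigma(i)=a][\tau(i)=b]\,x_{ab}$ while respecting the order of multiplication.
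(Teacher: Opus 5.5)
Your proposal is correct and is essentially the paper's argument. The paper just invokes Hrube\v{s} et al.'s Fact~1 (polynomial-time computable coefficients) without further detail, and your observation is exactly the verification it leaves implicit: a word $x_{a_1b_1}\cdots x_{a_nb_n}$ in the free algebra determines $(\sigma,\tau)$ uniquely, so its coefficient is $\frac{1}{n!}\sgn(\sigma)\sgn(\tau)$ or $0$.
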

Hence, to show that $\sdet$ is $\vnpac$-complete, we need to show that it is hard for the class. To achieve this, we show a reduction from the Hamiltonian Cycle polynomial ($\hcac$) in this setting. While the Hamiltonian Cycle polynomial is known to be $\VNP$-complete in the usual setting \cite[Theorem 3]{valiant1979completeness}, we prove this also hold in the associative, non-commutative setting. We start by defining the polynomial in this setting. \begin{definition}\label{def:hcac-poly}[Hamiltonian Cycle Polynomial in associative, non-commutative setting]
    Let $G=(V,E)$ be a directed graph with $V=\{v_1,\ldots,v_n\}$ having a fixed ordering on the vertices. We denote the the adjacency matrix of $G$ as $A_G$. The $(i,j)$-th entry of $A_G$ is denoted as $a_{i,j}$. Then, we define $A_G$ as follows: $a_{i,j} = 0$ if there is no edge from $v_i$ to $v_j$, and $a_{i,j} = x_{i,j}$ otherwise. Here, the $x_{i,j}$ are non-zero elements of the \emph{free} non-commutative algebra. Let $C_n$ denote the set of permutations in $\Sym_n$ with exactly one cycle of length $n$. Then, we define the Hamiltonian cycle polynomial as follows. \[
    \hcac = \sum_{\sigma \in C_n} x_{1,\sigma(1)}x_{\sigma(1), \sigma(2)} \cdots x_{\sigma(n-1), 1}
    \]
\end{definition}
Using this definition, we can establish the following.
 
\begin{theorem}\label{thm:hcac-vpnac-complete}
    $\hcac$ is $\vnpac$-complete under $p$-projections.
\end{theorem}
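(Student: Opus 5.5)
Membership and hardness are handled separately. For membership, I invoke \cite[Fact 1]{hrubevs2010relationless}, exactly as for Proposition~\ref{prop:sdet-in-vnpac}. Let $w = x_{i_1,j_1}\cdots x_{i_n,j_n}$ be a monomial of the free algebra. Its coefficient in $\hcac$ is $1$ iff three conditions hold: $i_1=1$, $j_m=i_{m+1}$ for all $m<n$, and $j_n=1$; and in addition the sequence $i_1,\dots,i_n$ visits every vertex exactly once. Otherwise the coefficient is $0$. This is checkable in polynomial time, so $\hcac\in\vnpac$.

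For hardness, the cleanest route is to reduce from a polynomial already known to be $\vnpac$-complete under $p$-projections. Hrube\v{s} et al.\ \cite{hrubevs2010relationless} show that a suitable non-commutative permanent (rows multiplied in order, $\sum_{\pi}x_{1,\pi(1)}\cdots x_{n,\pi(n)}$) is complete. The alternative is to redo Valiant's generic argument directly. In that argument, any $f\in\vnpac$ is written as $f(\bfx)=\sum_{\bfy\in\{0,1\}^m} g(\bfx,\bfy)$ with $g\in\vpac$. One then uses the fact that $\vpac$ polynomials of polynomial degree are computed by polynomial-size algebraic branching programs, i.e.\ as sums of weighted paths in a layered DAG, with edge labels multiplied in path order. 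I plan to take the permanent route and only fall back on the ABP argument if needed.

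The core construction is Valiant's cycle-cover-to-Hamiltonian-cycle gadgetry, adapted so that the \emph{order} of the multiplied labels is preserved. Start from the ordered permanent, viewed as a sum over $\pi$ of products read row by row. Build a directed graph with a single ``spine'' that visits the row gadgets $R_1,\dots,R_n$ in order and then returns to $v_1$. Inside $R_i$ the tour must choose exactly one column $j$ and traverse the edge labelled $x_{i,j}$. All other edges are labelled by the constant $1$; this is allowed for $p$-projections, since variables may be set to field constants. To enforce that each column is used exactly once, attach to each column $j$ a column gadget. Hamiltonicity forces its vertices to be visited exactly once, and the only way to do so is via a ``detour'' taken from within the unique row gadget that chose $j$. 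This is the standard XOR/exclusive-choice gadget. Since $\hcac$ is read along the cycle starting from vertex $1$, the order in which labelled edges appear along the tour is exactly $x_{1,\pi(1)},x_{2,\pi(2)},\dots$. Non-commutativity therefore causes no trouble, provided all variable-labelled edges lie on the spine in row order and every detour edge carries the scalar $1$. The resulting graph has polynomially many vertices, so the substitution into $\hcac$ of the appropriate size is a $p$-projection.

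The main obstacle is designing the column-exclusivity gadget so that three properties hold at once. First, every Hamiltonian cycle corresponds to exactly one permutation $\pi$. Second, each $\pi$ yields exactly one Hamiltonian cycle; in characteristic $0$, a constant multiplicity would also do, and could be divided out. Third, no spurious cycles interleave detours in a way that reorders the variable edges. My first attempt is a grid design: vertices $c_{i,j}$, with the tour at row $i$ sweeping through $c_{i,1},\dots,c_{i,n}$. At the chosen column it takes the labelled edge $x_{i,j}$ as a ``vertical'' jump that covers column $j$'s remaining vertices, while at non-chosen columns it uses unit edges. I then need to verify carefully, by case analysis on how a Hamiltonian cycle can enter and leave each column strip, that the only Hamiltonian cycles are these canonical ones. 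I expect this verification to be the bulk of the work.
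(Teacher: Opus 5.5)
Your membership argument is fine. Starting from the ordered permanent $\sum_\pi x_{1,\pi(1)}\cdots x_{n,\pi(n)}$, which Hrube\v{s} et al.\ prove complete, is a legitimate alternative to the paper's route, and it would spare you the rosette/glue simulation of Boolean sums. \textbf{The gap is that the hardness proof consists entirely of the permanent-to-Hamiltonian-cycle gadget, and you leave that gadget unconstructed.} It is not an off-the-shelf XOR gadget, which constrains only two edges and is built for the decision problem. You need all of the following at once:
(i) exactly one labelled edge per row;
(ii) each column covered by exactly one row;
(iii) after a detour into column $j$, the tour resumes in the \emph{same} row, so labels stay in row order;
(iv) exactly one (or a fixed number of) traversals per $\pi$.
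Every constant on your edges is $1$, so nothing can cancel, and every spurious Hamiltonian cycle yields a spurious monomial.

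Your grid sketch, as stated, violates these conditions:
\begin{itemize}
\item If row $i$ has unit edges $c_{i,j}\to c_{i,j+1}$ at every column, each row can sweep straight through without choosing a column. All rows doing so gives a Hamiltonian cycle of weight $1$. Similarly, one row can make two excursions while another makes none.
\item Unless the vertical excursion is forced to run to the bottom of column $j$, two rows can share a column.
\item You do not say how the excursion returns to row $i$. An exit from column $j$ does not remember $i$, so the tour can resume in another row and produce out-of-order monomials such as $x_{1,a}x_{3,b}x_{2,c}$. Adding row-specific return paths creates vertices that must be covered when unused, which is the same exclusivity problem again.
\end{itemize}
This is not a routine check you can postpone. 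By Nisan's lower bound \cite{nisan1991lower}, the ordered permanent has no polynomial-size non-commutative ABP. So column exclusivity cannot come from the layer-by-layer bookkeeping that makes Lemma~\ref{lem:hcac-sim-formulas} easy. It must be extracted from the global covering constraint, and that is precisely the case analysis you have not done.

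Your fallback also contains a false step: polynomial-degree $\vpac$ polynomials need not have polynomial-size ABPs. The palindrome $\sum_{w\in\{x_0,x_1\}^d} w\,w^{R}$ has the linear-size circuit $P_k=x_0P_{k-1}x_0+x_1P_{k-1}x_1$, yet Nisan's rank bound forces $2^d$ nodes in the middle layer of any ABP for it. The paper avoids this in three steps:
\begin{itemize}
\item It invokes \cite[Corollary 12]{hrubevs2010relationless}, which gives Boolean sums over polynomial-size \emph{formulas} rather than circuits.
\item It converts the formula into a layered DAG.
\item It turns $s$--$t$ paths into Hamiltonian cycles, preserving order, via unit intra-layer cycles with shifted inter-layer edges.
\end{itemize}
Even then, the paper still needs the rosette and glue gadgets to absorb the Boolean sum, a step your fallback never addresses.
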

The proof of this theorem uses analogously defined ``rosette'' and ``glue'' gadgets as used in the proof of $\VNP$-completeness of permanent in \cite{burgisser2000completeness}. In the usual (associative, commutative) setting such gadgets are known for the Hamiltonian Cycle polynomial in folklore. However, we couldn't find any work where this is formally proven. For completeness, we prove these constructions work and extend the notion to the associative, non-commutative setting.
We defer the proof of this theorem to Appendix~\ref{app:hcac-completeness}. Using this theorem, we can now establish our main result: the completeness of symmetrized determinants in this setting.
\begin{theorem}\label{thm:vnpac-complete}
    $\sdet$ is $\vnpac$-complete under $c$-reductions.
\end{theorem}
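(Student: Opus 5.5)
Membership is Proposition~\ref{prop:sdet-in-vnpac}, so only hardness remains. By Theorem~\ref{thm:hcac-vpnac-complete} it suffices to give a $c$-reduction from $\hcac$ to $\sdet$. I would reuse the reduction behind Theorem~\ref{thm:sharpP-hardness}, with graph edges replaced by free non-commutative variables.

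Fix $n$ and let $\alg$ be the algebra of Section~4.1, specialised as in the Remark ($t=0$, $s$ nilpotent with $s^n\neq 0$, $s^{n+1}=0$). Concretely, take $\mat(\F,n+1,n+1)\otimes\mat(\F,n,n)$ with $s=\sum_i e_{i,i+1}\otimes I_n$ and $u_{i,j}=I_{n+1}\otimes e_{i,j}$. I would work over this algebra tensored with the free algebra $\F\langle x_{i,j}\rangle$; the scalars commute with the $x$'s, so the tensor product is well defined.

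Build the $n\times n$ matrix $M$ with $M_{i,j}= x_{i,j}\, s\, u_{i,j}$ for $i\neq j$ and $M_{i,i}=0$. This is a projection, since each entry is a variable times a constant basis element of $\alg$. Then redo the term analysis of Theorem~\ref{thm:sharpP-hardness}. A product $\prod_k M_{\sigma(k)\tau(k)}$ factors as the monomial $x_{\sigma(1)\tau(1)}\cdots x_{\sigma(n)\tau(n)}$ times $s^n u_{\sigma(1)\tau(1)}\cdots u_{\sigma(n)\tau(n)}$. This is nonzero exactly when $\tau(k)=\sigma(k+1)$ for all $k$ and $\alpha=\tau\sigma^{-1}$ is an $n$-cycle. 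In that case the value is $s^n u_{l,l}\otimes x_{l,\alpha(l)}x_{\alpha(l),\alpha^2(l)}\cdots x_{\alpha^{-1}(l),l}$ with $l=\sigma(1)$, and the sign is $\sgn\alpha=(-1)^{n+1}$.

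Projecting onto the basis element $s^n u_{1,1}$ keeps only the starting point $l=1$. For each $n$-cycle $\alpha$ exactly one pair $(\sigma,\tau)$ contributes, so the coefficient is $\frac{(-1)^{n+1}}{n!}\,\hcac$. Note that the directed orientation of $\alpha$ is kept, unlike in the undirected count of Theorem~\ref{thm:sharpP-hardness}, because the variables do not commute. I would verify this count carefully; the earlier proof was sloppy about the factors $k$ and $2$.

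The remaining issue, and the main obstacle, is formal. $\sdet$ as a $\vnpac$ family is a polynomial in non-commuting variables, not a value in $\alg$, so the reduction must be phrased as a $c$-reduction. The plan is to write the matrix algebra $\alg$ (dimension $\mathrm{poly}(n)$) via structure constants: each entry becomes an $r$-tuple, and $\sdet$ over $\alg$ is computed by a family of polynomials, one per basis coordinate. This is exactly the setting where $\sdet$ is regarded as a polynomial family over a fixed polynomial-dimensional algebra. One oracle call to $\sdet_n$ over $\alg$, followed by extraction of the $s^nu_{1,1}$ coordinate and scaling by $(-1)^{n+1}n!$, then yields $\hcac$. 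All of this is polynomial-size, since the matrix entries of $s$ and $u_{i,j}$ are constants in $\{0,1\}$ and the coordinate extraction is linear. I expect the care here to lie in checking that this oracle use matches the definition of $c$-reductions in \cite{burgisser2000completeness}, adapted to the non-commutative setting of \cite{hrubevs2010relationless}. Completeness then follows by transitivity of $c$-reductions with Theorem~\ref{thm:hcac-vpnac-complete}.
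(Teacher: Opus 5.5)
Your proposal is correct and is essentially the paper's proof: membership plus Theorem~\ref{thm:hcac-vpnac-complete} reduce everything to encoding $\hcac$, and both arguments use the matrix units $u_{i,j}$, tensored with free variables, to force $\tau(k)=\sigma(k+1)$, so that every surviving term is a directed Hamiltonian cycle with sign $\sgn(\tau\sigma^{-1})$, after which $\hcac$ is read off by a linear map. The only difference is how the start vertex~$1$ is pinned: the paper uses an extra $2\times 2$ factor ($f$ on row~$1$, $s$ elsewhere, $sf=0$), while you get it for free by extracting the $u_{1,1}$-coordinate. Along the way, your nilpotent $s$ is redundant, your sign $(-1)^{n+1}$ is the correct one, and coordinate extraction avoids the paper's final trace step, which would annihilate $f$ since $\tr f=0$.
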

\begin{proof}
    Our proof uses an algebra very similar to the one used in proof of Theorem~\ref{thm:sharpP-hardness}. In this case, we set $t=0$ and $s=1$. Explicitly, the algebra is simply $\alg = \mat(\Q,n,n)$ which is generated by $\{e_{i,j}\}_{i,j \in [n]}$ where $e_{i,j}$ is a $n \times n$ matrix $1$ in the $(i,j)$-position and zeros everywhere else. Given a directed graph $G = (V,E)$ where $V = \{v_1,\ldots,v_n\}$, we now proceed to construct a $n \times n$ matrix $M$ as follows: \begin{enumerate}
    \item Set all diagonal entries to $0 \otimes 0 \otimes 0$.
    \item For $j \in [n]$, set $M_{1,j} = f \otimes e_{1,j} \otimes x_{1,j} $ where $x_{1,j}$ is non-zero and element of the \emph{free} non-commutative algebra if the graph $G$ had an edge from $1$ to $j$. Otherwise set $M_{1,j} = 0 \otimes 0 \otimes 0$.
    \item For $i \neq 1,$ $j \in [n]$, set $M_{i,j} = s \otimes e_{i,j} \otimes x_{i,j} $ where $x_{i,j}$ is non-zero and element of the free non-commutative algebra if the graph $G$ had an edge from $i$ to $j$. Otherwise set $M_{i,j} = 0 \otimes 0 \otimes 0$.
    \item Set all other entries as $0 \otimes 0 \otimes 0$.
\end{enumerate}
 Here $f = \begin{pmatrix}
    0 & 1 \\
    0 & 0
\end{pmatrix}$ and $s = \begin{pmatrix}
    0 & 0 \\
    0 & 1
\end{pmatrix}$. Intuitively, $f,s$ ensure that any Hamiltonian Cycle in $\sdet M$ always begins from node $1$ as $f \cdot s = f$ but $s \cdot f = 0$. Recall, \[
\sdet M = \frac{1}{n!} \sum_{\sigma, \tau \in \Sym_n \times \Sym_n} \prod_{i=1}^n m_{\sigma(i),\tau(i)}
\]
We will now show that many of these terms are actually the zero tensor. First, for any permutation pair $(\sigma, \tau)$ if $\sigma(i) = \tau(i)$ for any $i \in[n]$, this implies we are looking at the diagonal entries. Since diagonal entries are $0$ by construction, all such permutation pairs don't contribute any term to $\sdet M$. Further, since $s \cdot f = 0$, any non-zero term must have $f$ in the beginning followed by $s$. Hence, every non-zero term must have the first term of the form $m_{1,j}$ for some $j \in [n]$ such that there was a directed edge from $1$ to $j$ in $G$. Thus, $\sigma(1) = 1$ for all non-zero terms. Now we have $e_{i,j}$ in the middle tensor. Since $e_{i,j}\cdot e_{k,l}$ is non-zero only when $j=k$. This ensures that all non-zero terms must follow the permutation order: $\tau(i) = \sigma(i+1)$ for $i \in [n-1]$. Since $\sigma(1)=1$ in any non-zero terms, this ensure $\tau(n)=1$. Hence, the middle tensor in any non-zero term in $\sdet M$ will be $e_{1,1}$. Also, this ensures that all non-zero terms actually correspond to directed Hamiltonian Cycles in the graph. Analogous to the proof of Theorem~\ref{thm:sharpP-hardness}, we can see that there exists a $n$ cycle $\alpha$ in $\Sym_n$ such that $\sgn \sigma \sgn \tau = \sgn \alpha = (-1)^n$. Thus, any non-zero term is of the form $\frac{1}{n!} (-1)^n \big(f \otimes e_{1,1}  \otimes (x_{1,j_1}x_{j_1,j_2}\cdots x_{j_{n-1},1})\big)$. Here the rightmost part corresponds to the directed Hamiltonian Cycle $h$ given by $1 \rightarrow j_1 \rightarrow \ldots \rightarrow j_{n-1} \rightarrow 1$. 

Similarly, for any Hamiltonian cycle $h = 1 \rightarrow j_1 \rightarrow \ldots \rightarrow j_{n-1} \rightarrow 1$, it is easy to verify we get the non-zero term $\frac{1}{n!} (-1)^n \big(f \otimes e_{1,1}  \otimes (x_{1,j_1}x_{j_1,j_2}\cdots x_{j_{n-1},1})\big)$. Hence, there is a one to one correspondence between non-zero terms and Hamiltonian cycles. Thus, using Definition~\ref{def:hcac-poly}, we get \[
 \sdet M = \frac{1}{n!}(-1)^n (f \otimes e_{1,1} \otimes \hcac)
\]
Simply applying the operator $(-1)^n n! \big( \tr \otimes \tr \otimes 1 \big)$ to $\sdet M$ gives us $\hcac$ (here $\tr$ denotes the trace operator). Hence, $\hcac \leq_{c} \sdet$ under $c$-reductions. Combining with Proposition~\ref{prop:sdet-in-vnpac} and Theorem~\ref{thm:hcac-vpnac-complete}, we get that $\sdet$ is $\vnpac$-complete. 
\end{proof}

\subsection{In the commutative setting over the matrix algebra}
In this section, we will restrict our attention to the symmetrized determinant polynomial defined over matrix algebras, i.e., $\alg = \mat(\F,m,m)$ for some field $\F$ and $m \in \N$. We can now extend the definition of $\sdet$ polynomial to the commutative setting by defining a family of polynomials which together comprise it.
\begin{definition}\label{def:sdet-family-comm}
    Let $\alg = \mat(\F,m,m)$ and $M \in \mat(\alg,n,n)$, such that $M_{i,j} = m^{i,j} \in \alg$ where $(k,l)$-th entry of $m^{i,j}$ is denoted as $m^{i,j}_{k,l}$.
   
    Now, let $T = \sdet M \in \alg$. 
    
    We then define $\sdet_{m,n} M$ as the family of polynomials $\{T_{i,j}\}_{1\leq i,j \leq m}$. The $\{T_{i,j}\}$ are commutative polynomials over the variables $m^{i,j}_{k,l}$ of degree $n$.
\end{definition}
\begin{example}
    Let $\alg = \mat(\Q,2,2)$. Let $M \in \mat(\alg,2,2)$ such that \[
    \small M = \begin{pmatrix}
        \begin{pmatrix}
            a_{1,1} & a_{1,2} \\
            a_{2,1} & a_{2,2}
        \end{pmatrix} & \begin{pmatrix}
            b_{1,1} & b_{1,2} \\
            b_{2,1} & b_{2,2}
        \end{pmatrix} \\
        \begin{pmatrix}
            c_{1,1} & c_{1,2} \\
            c_{2,1} & c_{2,2}
        \end{pmatrix} & \begin{pmatrix}
            d_{1,1} & d_{1,2} \\
            d_{2,1} & d_{2,2}
        \end{pmatrix}
    \end{pmatrix}
    \]
    Then $\sdet M = \frac{1}{2}\begin{pmatrix}
        t_{1,1} &  t_{1,2} \\ 
        t_{2,1} &  t_{2,2}
    \end{pmatrix}$ where \begin{align*}
        t_{1,1} &= 2a_{1,1}d_{1,1} + a_{1,2}d_{2,1} + a_{2,1}d_{1,2} - 2b_{1,1}c_{1,1} - b_{1,2}c_{2,1} - b_{2,1}c_{1,2} \\
        t_{1,2} &= a_{1,1}d_{1,2} + a_{1,2}d_{2,2} + a_{1,2}d_{1,1} + a_{2,2}d_{1,2} - b_{1,1}c_{1,2} - b_{1,2}c_{2,2} - b_{1,2}c_{1,1} - b_{2,2}c_{1,2} \\
        t_{2,1} &= a_{1,1}d_{2,1} + a_{2,1}d_{1,1} + a_{2,1}d_{2,2} + a_{2,2}d_{2,1} - b_{1,1}c_{2,1} - b_{2,1}c_{1,1} - b_{2,1}c_{2,2} - b_{2,2}c_{2,1} \\
        t_{2,2} &= 2a_{2,2}d_{2,2} + a_{1,2}d_{2,1} + a_{2,1}d_{1,2} - 2b_{2,2}c_{2,2} - b_{1,2}c_{2,1} - b_{2,1}c_{1,2}
    \end{align*}
    We then define $\sdet_{2,2} A$ as the family of polynomials $\{\frac{1}{2}t_{1,1}, \frac{1}{2}t_{1,2}, \frac{1}{2}t_{2,1}, \frac{1}{2}t_{2,2} \}$.
\end{example}
  Using proof ideas similar to the proof of Theorem~\ref{thm:vnpac-complete}, we can show that this family is $\VNP$-complete in the commutative setting under $c$-reductions. First we show that the coefficient of every monomial for any $T_{i,j}$ (as per Definition~\ref{def:sdet-family-comm}) can be computed in polynomial time, thus putting the family of polynomials in $\VNP$ \cite{valiant1979completeness}. We then encode the Hamiltonian Cycle polynomial in commutative setting in $T_{a,b}$ for a fixed $a,b$ and then project to that particular position. 
\begin{theorem}\label{thm:vnpc-complete}
    $\sdet_{m,n}$ is $\VNP$-complete under $c$-reductions.
\end{theorem}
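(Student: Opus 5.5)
The proof has two parts: membership of the family $\sdet_{m,n}$ in $\VNP$, and $\VNP$-hardness via the commutative Hamiltonian cycle polynomial, mirroring the proof of Theorem~\ref{thm:vnpac-complete}.

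\textbf{Membership.} We write each $T_{a,b}$ explicitly. Expanding matrix products in $\alg=\mat(\F,m,m)$ gives
\[
T_{a,b}=\frac{1}{n!}\sum_{(\sigma,\tau)}\sgn\sigma\,\sgn\tau\sum_{k_1,\ldots,k_{n-1}\in[m]} m^{\sigma(1),\tau(1)}_{a,k_1}\,m^{\sigma(2),\tau(2)}_{k_1,k_2}\cdots m^{\sigma(n),\tau(n)}_{k_{n-1},b}.
\]
This is an exponential sum over the tuple $(\sigma,\tau,k_1,\ldots,k_{n-1})$, which has polynomial bit-length. The summand is computable by a polynomial-size circuit once the bits are given. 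Concretely, we select the variables with Boolean indicator sums, and we verify via polynomial-size arithmetic checks that $\sigma$ and $\tau$ are permutations and compute their signs. Valiant's criterion then places the family in $\VNP$. Alternatively, we can compute the coefficient of a given commutative monomial in polynomial time: a monomial fixes which variables $m^{i,j}_{k,l}$ occur, so it fixes a multiset of pairs $(i,j)$ and a multiset of matrix-index edges $(k,l)$. Counting the compatible orderings reduces to counting Eulerian-type walks with prescribed steps, and this is the less clean route. We therefore use the explicit exponential-sum form, noting that the factor $1/n!$ is a constant in characteristic $0$ and causes no problem.

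\textbf{Hardness.} We use $\VNP$-completeness of the commutative Hamiltonian cycle polynomial $\mathsf{HC}_n=\sum_{\sigma\in C_n}\prod_i x_{i,\sigma(i)}$ (Valiant). We choose $m=2n$ and realize the algebra $\mat(\Q,2,2)\otimes\mat(\Q,n,n)$ inside $\mat(\Q,2n,2n)$. We set $M_{1,j}=x_{1,j}\,(f\otimes e_{1,j})$ and $M_{i,j}=x_{i,j}\,(s\otimes e_{i,j})$ for $i\neq 1$, $i\neq j$, and put zeros on the diagonal and at non-edges. The polynomial variables now sit as scalar coefficients of the constant matrices $f\otimes e_{i,j}$. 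This is a projection: each variable $m^{i,j}_{k,l}$ is set to $x_{i,j}$, to $0$, or to a constant. The argument from the proof of Theorem~\ref{thm:vnpac-complete} goes through verbatim. Terms with $\sigma(i)=\tau(i)$ vanish; $s\cdot f=0$ forces $\sigma(1)=1$; the $e$-chain forces $\tau(i)=\sigma(i+1)$ and $\tau(n)=1$; and $\sgn\sigma\,\sgn\tau=(-1)^{n+1}$ for an $n$-cycle. The result is
\[
\sdet M=\frac{\pm1}{n!}\,(f\otimes e_{1,1})\cdot \mathsf{HC}_n ,
\]
where the variables now commute. Reading off the single matrix entry $T_{a,b}$ at the position of the unique nonzero entry of $f\otimes e_{1,1}$ (row $1$, column $n+1$ in the Kronecker layout), and scaling by $\pm n!$, yields $\mathsf{HC}_n$. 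This is a $p$-projection, hence in particular a $c$-reduction.

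\textbf{Main obstacle.} The delicate step is the sign and multiplicity bookkeeping: we must check that each directed Hamiltonian cycle arises from exactly one pair $(\sigma,\tau)$ with $\sigma(1)=1$ and the chain condition. We must also check that commutativity of the variables introduces no new cancellation. Since the $x_{i,j}$ are now scalars, distinct cycles could collide only if they shared a monomial, and a monomial determines its edge set and hence the directed cycle; so no cancellation occurs. Membership in $\VNP$ is routine by comparison.
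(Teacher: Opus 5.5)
Your proposal is correct. The hardness half is the paper's construction essentially verbatim: the same embedding into $\mat(\Q,2n,2n)$, the same entries $x_{1,j}(f\otimes e_{1,j})$ and $x_{i,j}(s\otimes e_{i,j})$, and the same read-off at position $(1,n+1)$. Your sign $(-1)^{n+1}$ for the $n$-cycle $\tau\sigma^{-1}$ is the correct one; the paper writes $(-1)^n$, which does not affect the reduction. One small correction: the final rescaling by $\pm n!$ makes this a projection only up to a constant, so it is a $c$-reduction rather than literally a $p$-projection. That is all the theorem claims.

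The genuine difference is membership in $\VNP$. The paper uses Valiant's criterion. It asserts that the coefficient of $\prod_t m^{a_t,b_t}_{x_t,y_t}$ in $T_{k,l}$ is $\frac{K(n-2)!}{n!}\sgn\alpha$ with $K=c_1c_2-c_3$. In other words, after fixing the first and last factor, it lets the remaining $n-2$ factors be ordered freely. You instead write $T_{a,b}$ directly as an exponential sum over $(\sigma,\tau,k_1,\ldots,k_{n-1})$ of a polynomial-size circuit. That is the definition of $\VNP$ and needs no counting at all.

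Your route is the safer one. As you observed, the inner-index chain $y_t=x_{t+1}$ restricts the valid orderings to Eulerian trails from $k$ to $l$ in the multigraph on $[m]$ with edges $x_t\to y_t$, so the middle factors are not freely permutable. For example, take $n=4$, $m\ge 5$ and the chain $1\to2\to3\to4\to5$: exactly one ordering survives, whereas the paper's count $K(n-2)!$ gives $2$. The coefficient route can be repaired with the BEST theorem, which counts such labelled Eulerian trails in polynomial time. That repair would also give the slightly stronger fact that every coefficient is computable in polynomial time. For the theorem as stated, though, your explicit exponential sum is simpler and sidesteps the issue entirely.
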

 We defer the proof to Appendix~\ref{app:sdet-comm-complete}.
 \begin{remark}
     In the usual setting, when we say $\{f_n\}$ is a family of polynomials, given an $n$ we get a single polynomial. In this setting, given a $n$ we get $m^2$ many polynomials and then we show that computing a fixed polynomial amongst them is hard under a specific algebra.
 \end{remark}
\section{Conclusion}
In this work, we studied the symmetrized determinants. We showed that analogous to ordinary determinants, the symmetrized determinants also follow the principal minor expansion property. This property allows the usual determinants of positive semi-definite matrices to be viewed as probability distributions with wide applications in machine learning, and exploring the symmetrized determinants through this lens is an exciting direction to pursue. We also showed that the symmetrized determinants are $\sharpP$-hard and $\vnpac$-complete for suitably chosen algebras of polynomial dimension. At the same time, it is known that for an algebra of dimension $r$, they can be computed in $O(n^{r+3})$. The appearance of $r$ in the exponent of $n$ seems unavoidable there, and proving this parametrized hardness (or finding an $\mathsf{FPT}$ algorithm for constant-dimensional algebras) for the symmetrized determinants is another interesting open problem.  

\newpage
\bibliography{main.bib}

\begin{thebibliography}{10}

\bibitem{arora2009computational}
Sanjeev Arora and Boaz Barak.
\newblock {\em Computational complexity: a modern approach}.
\newblock Cambridge University Press, 2009.

\bibitem{arvind2016noncommutative}
Vikraman Arvind, Pushkar~S Joglekar, and S~Raja.
\newblock Noncommutative valiant's classes: Structure and complete problems.
\newblock {\em ACM Transactions on Computation Theory (ToCT)}, 9(1):1--29, 2016.

\bibitem{arvind2010hardness}
Vikraman Arvind and Srikanth Srinivasan.
\newblock On the hardness of the noncommutative determinant.
\newblock In {\em Proceedings of the forty-second ACM symposium on Theory of computing}, pages 677--686, 2010.

\bibitem{barvinok1997computing}
Alexander Barvinok.
\newblock Computing mixed discriminants, mixed volumes, and permanents.
\newblock {\em Discrete \& Computational Geometry}, 18(2):205--237, 1997.

\bibitem{Barvinok2000NewPE}
Alexander~I. Barvinok.
\newblock New permanent estimators via non-commutative determinants.
\newblock {\em arXiv: Combinatorics}, 2000.
\newblock URL: \url{https://api.semanticscholar.org/CorpusID:117950899}.

\bibitem{bjorklund2012counting}
Andreas Bj{\"o}rklund.
\newblock Counting perfect matchings as fast as ryser.
\newblock In {\em Proceedings of the twenty-third annual acm-siam symposium on discrete algorithms}, pages 914--921. SIAM, 2012.

\bibitem{blaser2015noncommutativity}
Markus Bl{\"a}ser.
\newblock Noncommutativity makes determinants hard.
\newblock {\em Information and Computation}, 243:133--144, 2015.

\bibitem{burgisser2000completeness}
Peter B{\"u}rgisser.
\newblock {\em Completeness and reduction in algebraic complexity theory}, volume~7.
\newblock Springer Science \& Business Media, 2000.

\bibitem{chebotarev2006matrix}
Pavel Chebotarev and Elena Shamis.
\newblock Matrix-forest theorems.
\newblock {\em arXiv preprint math/0602575}, 2006.

\bibitem{chien2011almost}
Steve Chien, Prahladh Harsha, Alistair Sinclair, and Srikanth Srinivasan.
\newblock Almost settling the hardness of noncommutative determinant.
\newblock In {\em Proceedings of the forty-third annual ACM symposium on Theory of computing}, pages 499--508, 2011.

\bibitem{horn2012matrix}
Roger~A Horn and Charles~R Johnson.
\newblock {\em Matrix analysis}.
\newblock Cambridge university press, 2012.

\bibitem{hrubevs2010relationless}
Pavel Hrube{\v{s}}, Avi Wigderson, and Amir Yehudayoff.
\newblock Relationless completeness and separations.
\newblock In {\em 2010 IEEE 25th Annual Conference on Computational Complexity}, pages 280--290. IEEE, 2010.

\bibitem{warwickGCTnotes}
Christian Ikenmeyer.
\newblock Homework 9 - a first introduction to geometric complexity theory.
\newblock 2018.
\newblock URL: \url{https://dcs.warwick.ac.uk/~u2270030/teaching_sb/summer18/firstintrotogct/homework9.pdf}.

\bibitem{kulesza2012determinantal}
Alex Kulesza and Ben Taskar.
\newblock Determinantal point processes for machine learning.
\newblock {\em Foundations and Trends{\textregistered} in Machine Learning}, 5(2-3):123--286, 2012.

\bibitem{mahajan-vinay}
Meena Mahajan and V.~Vinay.
\newblock Determinant: Combinatorics, algorithms, and complexity.
\newblock {\em Chic. J. Theor. Comput. Sci.}, 1997.
\newblock URL: \url{http://cjtcs.cs.uchicago.edu/articles/1997/5/contents.html}.

\bibitem{moore2012approximating}
Cristopher Moore and Alexander Russell.
\newblock Approximating the permanent via nonabelian determinants.
\newblock {\em SIAM Journal on Computing}, 41(2):332--355, 2012.

\bibitem{nisan1991lower}
Noam Nisan.
\newblock Lower bounds for non-commutative computation.
\newblock In {\em Proceedings of the twenty-third annual ACM symposium on Theory of computing}, pages 410--418, 1991.

\bibitem{ramprasadsurvey}
Ramprasad Saptharishi.
\newblock A survey of lower bounds in arithmetic circuit complexity.
\newblock {\em Github survey}, 95, 2015.

\bibitem{valiant1979completeness}
Leslie~G Valiant.
\newblock Completeness classes in algebra.
\newblock In {\em Proceedings of the eleventh annual ACM symposium on Theory of computing}, pages 249--261, 1979.

\end{thebibliography}
\newpage
\appendix
\section{$\vnpac$-completeness of $\hcac$}\label{app:hcac-completeness}

In this section, we will show that $\hcac$, as defined in Definition~\ref{def:hcac-poly} is $\vnpac$ complete. Hrube\v{s} et. al. \cite[Corollary 12]{hrubevs2010relationless} showed that any polynomial in $\vnpac$ can also be simulated by an exponential boolean sum over polynomial sized \emph{formulas}. Hence, to establish that $\hcac$ is $\vnpac$-complete, we need to show two things: (1) Any formula can be simulated by a ``small'' $\hcac$ polynomial, \emph{and} (2) Boolean sums can be simulated by ``small'' $\hcac$ polynomials. It is also known \cite[Lemma 2.15]{burgisser2000completeness} that every polynomial sized formula $f$ can be represented as a polynomial sized DAG $G$ with identifiable $s,t$ nodes such that the sum of weights of all $s$ to $t$ paths of $G$ equals $f$. It is easy to verify that also holds in non-commutative settings. Hence, to establish (1) we need to show that such DAGs can be simulated by a polynomial sized $\hcac$ polynomial. We show this in the following lemma:
\begin{lemma}[$\hcac$ simulates Formulas]\label{lem:hcac-sim-formulas}
    Any arithmetic formula can be simulated by the $\hcac$ polynomial with only a polynomial blowup in size.
\end{lemma}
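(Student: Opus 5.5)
We reduce to path sums in DAGs and then close each path into a Hamiltonian cycle. By \cite[Lemma 2.15]{burgisser2000completeness}, which holds verbatim in the non-commutative setting, a formula $f$ of size $s$ becomes a DAG $D$ with $\mathrm{poly}(s)$ nodes, a source $u$ and a sink $w$, whose edges carry variables or constants. The weight of a path is the product of its edge labels taken in the order the path visits them, and $f$ is the sum of all $u$--$w$ path weights. Non-commutativity is harmless here because each path already fixes the left-to-right order of its factors, and \hcac\ also multiplies its edge variables in traversal order. The task is therefore to build a directed graph $H$ of polynomial size with $\hcac(H)=f$ after a projection.

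First we add a return edge $w\to u$ of weight $1$. Each cycle through it then corresponds to exactly one $u$--$w$ path, since $D$ is acyclic. Such a cycle need not visit every vertex. To fix this, we give every vertex $v$ of $D$ other than $u,w$ a self-loop of weight $1$, so that vertices not on the path are covered trivially. This is the cycle-cover picture used in the permanent proof. For \hcac\ self-loops are not allowed, so we replace them with a gadget.

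For each internal vertex $v$ we split it into an in-node $v^{\mathrm{in}}$ and an out-node $v^{\mathrm{out}}$. We then insert a ``bypass'' structure: a global backbone cycle that threads through all vertices in a fixed topological order. At each vertex $v$ the tour either
\begin{itemize}
\item[(a)] enters $v$ from the path via $v^{\mathrm{in}}$ and leaves toward the next path edge, with the backbone picking $v$ up through a disjoint detour, or
\item[(b)] traverses $v$ only as part of the backbone with weight $1$.
\end{itemize}
A concrete choice is to order the vertices of $D$ topologically as $u=v_0,\dots,v_N=w$. The Hamiltonian cycle walks forward along the chosen path using the weighted edges. It then walks back from $w$ to $u$ along weight-$1$ ``return'' edges $v_{i}'\to v_{i-1}'$ on a second copy of the vertices. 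Each vertex $v_i$ is paired with $v_i'$ through a small gadget that forces it to be visited either on the forward pass (on the path) or on the backward pass (skipped). The backward pass contains only constants, and the forward-pass variables appear in path order, so the non-commutative monomial of each tour equals the path weight, up to scalar $1$s that commute.

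The main obstacle is the correctness of this gadget. We must show that every Hamiltonian cycle of $H$ comes from exactly one $u$--$w$ path with multiplicity $1$, and that no spurious tours mix forward and backward edges out of order. We would prove this by induction along the topological order, arguing from degree constraints: each $v_i'$ has in-edges only from $v_{i+1}'$ or from gadget nodes. This forces a unique completion for each path and rules out all other tours. Finally, the starting vertex $1$ of \hcac\ is identified with $u$, so that the product begins at the correct variable. All edge weights are constants or variables, so the reduction is a $p$-projection with polynomial blowup.
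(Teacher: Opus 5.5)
Your reduction to $u$--$w$ path sums, the remark that a path already fixes the order of multiplication, and the plan of a forward pass along the path followed by a weight-$1$ return pass that absorbs the unused vertices are all sound. But the proposal stops exactly where the content of the lemma lies. The ``small gadget'' pairing $v_i$ with $v_i'$ is never specified, and neither is the earlier in/out-node backbone. You yourself name its correctness as the main obstacle and only announce an induction. Without a concrete gadget nothing excludes the bad tours. One kind enters an on-path vertex $v_i$ by an edge of $D$ and then escapes into the return chain. Another absorbs a skipped vertex in two different ways, giving multiplicity $>1$. So $\hcac(H)=f$ is not established.

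The gap closes with the simplest possible gadget, and then your route is a genuinely different proof from the paper's. Keep $v_0=u,\dots,v_N=w$ in topological order and add new vertices $v_1',\dots,v_{N-1}'$, with $v_N':=w$ and $v_0':=u$. Add weight-$1$ edges $v_i'\to v_{i-1}'$ for $1\le i\le N$, and $v_i'\to v_i$, $v_i\to v_{i-1}'$ for $1\le i\le N-1$, subdividing any parallel edges of $D$. Then $v_{N-1}'$ can only be entered from $w$. For $1\le i\le N-1$, the only successors of $v_i'$ are $v_{i-1}'$ and $v_i$, and the only predecessors of $v_{i-1}'$ are $v_i'$ and $v_i$. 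Hence a Hamiltonian cycle takes one of two options at each $i$:
\begin{itemize}
\item it uses $v_i'\to v_{i-1}'$, which blocks $v_i\to v_{i-1}'$ and forces $v_i$ to be entered and left by edges of $D$;
\item it uses $v_i'\to v_i$, which forces $v_i\to v_{i-1}'$, so $v_i$ touches no edge of $D$.
\end{itemize}
The edges of $D$ are the only out-edges of $u$ and the only in-edges of $w$. So the $D$-edges on the cycle give $u$ out-degree one, $w$ in-degree one, and every `forward' vertex in- and out-degree one; by acyclicity they form a single $u$--$w$ path. Conversely, each path extends to exactly one Hamiltonian cycle, whose monomial read from vertex $1=u$ is the path weight followed by $1$s.

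The paper instead layers the DAG and closes each layer into a directed weight-$1$ cycle. It then shifts every inter-layer edge $(v_{i,j},v_{i+1,l})$ to $(v_{i,j},v_{i+1,l+1})$ (indices mod $k_{i+1}$). A tour entering a layer must therefore sweep the whole layer and leave at the original path vertex. That approach needs no new vertices beyond the layering step. Your return chain works for an arbitrary topological order, at the price of doubling the vertex set.
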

\begin{proof}
    Using \cite[Lemma 2.15]{burgisser2000completeness}, let $\hat G=(V,E)$ be the DAG representing the arithmetic formula $f$, i.e, the weighted sum of all $s\rightsquigarrow t$ paths in $G$ is equal to $f$. It is easy to see that there exists an equivalent graph $G$ that can be partitioned in layers $L_0,...,L_d$ where $L_0 = \{s\}, L_d  = \{t\}, L_i = \{v_{i,1},\ldots, v_{i,k_i}\}$ such that each edge goes from some vertex in layer $L_i$ to layer $L_{i+1}$ and the sum of all $s \rightsquigarrow t$ is same in both $\hat G$ and $G$. Further, $s$ (resp. $t$) have indegree (resp. outdegree) as $0$. Construct $G'=(V',E')$ such that $V'=V$ and we define $E'$ in the following way:\begin{itemize}
        \item Add edges from $v_{i,j}$ to $v_{i, (j+1) \mod k_i}$ for each $j \in [k_i]$ for all $i \in \{1,2,\ldots,d-1\}$ with edge weight $1$. 
        \item For edges of the form $(v_{d-1,j},t)$ with edge weight $w$ in $E$, add them to $E'$ for all $j \in [k_{d-1}]$.
        \item For all edges from $s$ to $v_{1,j}$ with edge weight $w$ in $E$, add the edge $s$ to $v_{1,(j+1) \mod k_1}$ with weight $w$ in $E'$.
        \item Let $i\in \{1,\ldots,d-2\}$. For all edges of the form $(v_{i,j},v_{i+1,l})$ with edge weight $w$ in $E$, add the edge $(v_{i,j},v_{i+1,(l+1)\mod k_{i+1}})$ with weight $w$ to $E'$.
        \item Add an edge $(t,s)$ with weight $1$ to $E'$. 
    \end{itemize} 
    We will show that $s\rightsquigarrow t$ paths in $G$ with weight $w$ have a $1-1$ correspondence with Hamiltonian cycles of weight $w$ in $G'$. Further, the nodes are traversed in the same order.
    
Consider the path $s \rightarrow v_{1,a_1 \mod k_1} \rightarrow \ldots v_{d-1,a_{d-1} \mod k_{d-1}} \rightarrow t$. Then we have the following cycle in $G'$: 
\begin{align*}
    s \\ \rightarrow \big( v_{1,a_1 + 1 \mod k_1} \rightarrow v_{1,a_1 + 2 \mod k_1} \ldots \rightarrow v_{1,a_1 \mod k_1}  \big) \\ \rightarrow \big( v_{2,a_2 + 1 \mod k_2} \rightarrow v_{1,a_2 + 2 \mod k_2} \ldots \rightarrow v_{1,a_2 \mod k_2}  \big) \\ \rightarrow \ldots  \\ \rightarrow \big( v_{d-1,a_{d-1} + 1 \mod k_{d-1}}  \ldots \rightarrow v_{d-1,a_{d-1} \mod k_{d-1}}  \big) \\ \rightarrow t \rightarrow s
\end{align*}
Clearly this cycle visits each vertex exactly once in $G'$ and is Hamiltonian. Further, the weight is the same and in the same order. Now consider any Hamiltonian cycle in $G'$ beginning from $s$. It must visit each vertex exactly once. Due to the graph being layered and acyclic, once we leave a particular layer there is no way to return to it. So we must traverse all the nodes in it. Thus, if we go from $s$ to $v_{1,a_1 \mod k_1}$ we must travel to $v_{1,(a_1 - 1) \mod k_1}$ before going to the second layer to say $v_{2,a_2 \mod k_2}$. But this means we must have edges $s$ to $v_{1,(a_1-1) \mod k_1}$ to $v_{2, (a_2-1) \mod k_2}$ in $G$. Arguing inductively, every Hamiltonian cycle in $G'$ has a corresponding ordered path of the same weight in $G$. Since $V'=V$ and the only new edges in $E'$ are $k_i$ per layer $i$, size of $G'$ only increases polynomially in the size of input graph $\hat G$.
\end{proof}

We now have to prove that Boolean sums can be efficiently simulated by the $\hcac$ polynomial. To prove this, we will take the help of ``Rosettes'' and ``Glue'' graphs (see \cite{burgisser2000completeness,ramprasadsurvey}) conventionally used to prove the $\VNP$-completeness of the \emph{Permanent} polynomial family. However, we will need to make some modifications to adapt them to $\hcac$ polynomial. We will base our constructions on those that appear in \cite{warwickGCTnotes}.

Formally, we have $f(x_1,\ldots,x_n) = \sum_{\bfa \in \{0,1\}^m} g(x_1,\ldots,x_n,a_1,\ldots,a_m)$. By Lemma~\ref{lem:hcac-sim-formulas}, we have that there is a graph $G$ with fixed start and sink node $s$ and $t$ such that the weighted sum over all Hamiltonian cycles beginning from $s$ is equal to $g(\bfx, \bfy)$. We now want to compute $\sum_{\bfa} g(\bfx, \bfa)$.

We first look at the simpler case where there exists $y \in \bfy$ such that there is only one edge labeled with $y$ in $G'$. When we take the boolean sum over $y$, it would mean we count the cycles which included $y$ once and those that didn't include $y$ twice. By our construction in the proof of Lemma~\ref{lem:hcac-sim-formulas}, an edge with weight label other than $1$ can only be present between different layers, as all edges within the same layer have weight $1$. Hence, we can assume that the edge $e_y$ with weight $y$ was between some node $v_a$ in layer $i$ and $v_b$ in layer $i+1$. Then, we can modify our graph $G$ by replacing weight of edge between $v_a$ and $v_b$ with $1$ and for all other edges between layers $i$ and $i+1$, we multiply the weight by a constant factor $2$. Thus, any Hamiltonian cycle in $G$ beginning at $s$ which took the edge with weight $y$ is counted once, while all other Hamiltonian cycles have their weight doubled, as each cycle must cross from layer $i$ to layer $i+1$ at some point. Further, the order of traversal of the cycles remains the same and hence the order of multiplication is never altered. 

Without loss of generality, we can now consider that all $y \in \bfy$ appear more than once as edge weights. This case is considerably trickier, and we will now use auxiliary graphs -- the ``Rosette'' and the ``Glue'' graph -- to modify our original graph $G$ such that it can simulate Boolean sums. We begin by describing these gadgets based on \cite{warwickGCTnotes}. Intuitively, the aim of $R_g$ is to ensure that if we avoid a particular subset of edges (called the \emph{indicator} edges) in $R_g$, then we have two $s \rightsquigarrow t$ paths, while if we take any edge from that subset there is only one such path. An example of the Rosette Graph ($R_g^5 = (V_r,E_r)$) with $5$ indicator edges is shown in Figure~\ref{fig:rosette}. Here each edge has weight $1$. The dashed edges represent the \emph{indicator} edges. There are \emph{exactly} two Hamiltonian paths from $s$ to $t$ that don't include any indicator edges. These paths are $p_1 \coloneq s \rightarrow 1_i \rightarrow 1_o \rightarrow b_1 \rightarrow \ldots \rightarrow 5_i \rightarrow 5_o \rightarrow t$, and $p_2 \coloneq s \rightarrow 5_i \rightarrow 5_o \rightarrow b_4 \rightarrow \ldots \rightarrow 1_i \rightarrow 1_o \rightarrow t$. Further, by construction, there is \emph{exactly} one Hamiltonian path from $s$ to $t$ including any non-empty subset of indicator edges. We can draw an analogous Rosette Graph $R_g^i$ for all $i\geq 2$, where $i$ is the number of indicator edges.
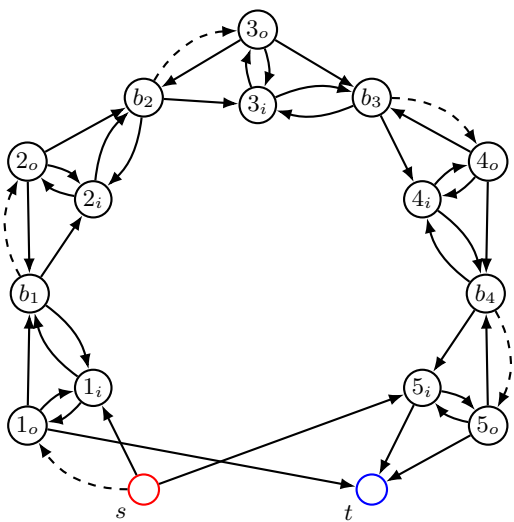
\begin{figure}[h]
    \centering

\begin{tikzpicture}[
    >={Latex[length=2mm]},
    every node/.style={circle,draw,inner sep=1.2pt,minimum size=4mm,font=\small},
    every path/.style={->,thick}
]

\node[label=below left:$s$, color=red] (s) at ({-3*sin(30)},{-3*cos(30)}) {};
\node[label=below left:$t$, color=blue] (t) at ({3*sin(30)},{-3*cos(30)}) {};

\node[] (b1) at ({-3},{0}) {$b_1$};
\node[] (b2) at ({-3*sin(30)},{3*cos(30)}) {$b_2$};
\node[] (b3) at ({3*sin(30)},{3*cos(30)}) {$b_3$};
\node[] (b4) at ({3},{0}) {$b_4$};

\node[] (1i) at ({-2.5*sin(60)},{-2.5*cos(60)}) {$1_i$};
\node[] (1o) at ({-3.5*sin(60)},{-3.5*cos(60)}) {$1_o$};
\node[] (2i) at ({-2.5*sin(60)},{2.5*cos(60)}) {$2_i$};
\node[] (2o) at ({-3.5*sin(60)},{3.5*cos(60)}) {$2_o$};
\node[] (3i) at ({0},{2.5}) {$3_i$};
\node[] (3o) at ({0},{3.5}) {$3_o$};
\node[] (4i) at ({2.5*sin(60)},{2.5*cos(60)}) {$4_i$};
\node[] (4o) at ({3.5*sin(60)},{3.5*cos(60)}) {$4_o$};
\node[] (5i) at ({2.5*sin(60)},{-2.5*cos(60)}) {$5_i$};
\node[] (5o) at ({3.5*sin(60)},{-3.5*cos(60)}) {$5_o$};

 \draw[->] (1i) to[bend left=20] (1o);
 \draw[->] (1o) to[bend left=20] (1i);
 \draw[->] (2i) to[bend left=20] (2o);
 \draw[->] (2o) to[bend left=20] (2i);
 \draw[->] (3i) to[bend left=20] (3o);
 \draw[->] (3o) to[bend left=20] (3i);
 \draw[->] (4i) to[bend left=20] (4o);
 \draw[->] (4o) to[bend left=20] (4i);
 \draw[->] (5i) to[bend left=20] (5o);
 \draw[->] (5o) to[bend left=20] (5i);

 \draw[->] (s) -- (1i);
 \draw[->] (b1) -- (2i);
 \draw[->] (b2) -- (3i);
 \draw[->] (b3) -- (4i);
 \draw[->] (b4) -- (5i);

 \draw[->] (1i)  to[bend left=20] (b1);
 \draw[->] (1o) -- (b1);
 \draw[->] (2i)  to[bend left=20] (b2);
 \draw[->] (2o) -- (b2);
 \draw[->] (3i)  to[bend left=20] (b3);
 \draw[->] (3o) -- (b3);
 \draw[->] (4i)  to[bend left=20] (b4);
 \draw[->] (4o) -- (b4);
 \draw[->] (5i) -- (t);
 \draw[->] (5o) -- (t);

 \draw[->] (s) -- (5i);
 \draw[->] (b1)  to[bend left=20] (1i);
 \draw[->] (b2)  to[bend left=20] (2i);
 \draw[->] (b3)  to[bend left=20] (3i);
 \draw[->] (b4)  to[bend left=20] (4i);

 \draw[->] (1o) -- (t);
 \draw[->] (2o) -- (b1);
 \draw[->] (3o) -- (b2);
 \draw[->] (4o) -- (b3);
 \draw[->] (5o) -- (b4);

\draw[->, dashed] (s)  to[bend left=28] (1o);
\draw[->, dashed] (b1)  to[bend left=28] (2o);
\draw[->, dashed] (b2)  to[bend left=28] (3o);
\draw[->, dashed] (b3)  to[bend left=28] (4o);
\draw[->, dashed] (b4)  to[bend left=28] (5o);

\end{tikzpicture}
    \caption{Rosette Graph with $5$ connector edges.}
    \label{fig:rosette}
\end{figure}

The Glue graph ($G_g = (V_g,E_g)$) is shown in Figure~\ref{fig:glue}. Here all the edge weights are $1$. It is used to ``glue'' together two edges $\eps=(u,v)$ and $\eps'=(u',v')$ with weights $w,w'$ appearing in any graph $H=(V,E)$ with start and sink nodes $\hat s,\hat t$. We define a new graph $H' = (V',E')$ with the node set of $G_g \cup H$. We first set $V' = V \cup V_g$. We now define $E'$. First, we add $E_g$ to $E'$. Now, let $E_- = E \setminus \{ (u,v), (u',v'), (\hat t, \hat s) \}$. Then, we add $E_-$ to $E'$. Further, we add edges of weight $1$ between 
$(t, \hat s) \text{ and } (\hat t,s) $ in $E'$. Now, suppose there was a Hamiltonian Cycle $C\coloneq \hat s \rightarrow x_1 \rightarrow \ldots \rightarrow x_m \rightarrow \hat t \rightarrow \hat s$ in $H$ that didn't take the edges $\eps$ or $\eps'$. Then, $C'\coloneq \hat s \rightarrow x_1 \rightarrow \ldots \rightarrow x_m \rightarrow \hat t \rightarrow s \rightarrow c \rightarrow b \rightarrow a \rightarrow d \rightarrow e \rightarrow f \rightarrow t \rightarrow \hat s$ is a Hamiltonian cycle in $H'$ of the same weight and is easy to verify that this is the only Hamiltonian Cycle that includes $C$. Now suppose, there was a Hamiltonian Cycle $C\coloneq \hat s \rightarrow \ldots \rightarrow x \rightarrow u \rightarrow v \rightarrow \ldots u'  \rightarrow v' \ldots \rightarrow  x_m \rightarrow \hat t \rightarrow \hat s$ in $H$ that includes both $\eps, \eps'$. Then, in $H'$, $C'\coloneq \hat s \rightarrow \ldots \rightarrow x \rightarrow u \rightarrow a \rightarrow b \rightarrow c \rightarrow v \rightarrow \ldots u' \rightarrow f \rightarrow e \rightarrow d  \rightarrow v' \ldots \rightarrow  x_m \rightarrow \hat t \rightarrow  s \rightarrow t \rightarrow \hat s$ is a Hamiltonian Cycle in $H'$ and in fact the only one corresponding to $C$. However, if we take a Hamiltonian cycle in $H$ that includes one of $\eps, \eps'$ but not the other, by the construction of $G_g$ it is impossible to get a corresponding Hamiltonian cycle in $H'$. Hence, $G_g$ essentially glues the edges $\eps, \eps'$, that is, either none of them will be considered in the Hamiltonian Cycle, or both will be considered. Further, when they are not taken the weight remains the same, while when they are both taken the edge weights of $\eps,\eps'$ are replaced by $1$.

\begin{figure}[h]
    \centering
  
\begin{tikzpicture}[
    >=Latex,
    every node/.style={circle,draw,inner sep=1.2pt,minimum size=4mm,font=\small},
    every path/.style={->,thick}
]

\node[label=above:$u$]  (u)  at (0,2.2) {};
\node[label=above:$v$]  (v)  at (3,2.15) {};
\node[label=right:$s$, color=red]  (s)  at (4.5,1.1) {};
\node[label=right:$t$, color=blue]  (t)  at (4.5,-0.7) {};
\node[label=below:$u'$] (up) at (3,-2.2) {};
\node[label=below:$v'$] (vp) at (0,-2.25) {};

\node (a) at (0,1.1) {$a$};
\node (b) at (1.5,1.1) {$b$};
\node (c) at (3,1.1) {$c$};

\node (d) at (0,-0.7) {$d$};
\node (e) at (1.5,-0.7) {$e$};
\node (f) at (3,-0.7) {$f$};

\draw (u)  -- (a);
\draw (c)  -- (v);
\draw (s)  -- (c);
\draw (f)  -- (t);
\draw (d)  -- (vp);
\draw (up) -- (f);
\draw (s) -- (t);

\draw (a) to[bend left=18] (b);
\draw (b) to[bend left=18] (a);

\draw (b) to[bend left=18] (c);
\draw (c) to[bend left=18] (b);

\draw (d) to[bend left=18] (e);
\draw (e) to[bend left=18] (d);

\draw (e) to[bend left=18] (f);
\draw (f) to[bend left=18] (e);

\draw (a) -- (d);
\draw (c) -- (f);

\end{tikzpicture}
    \caption{Glue graph for edges $(u,v)$ and $(u',v')$.}
    \label{fig:glue}
\end{figure}
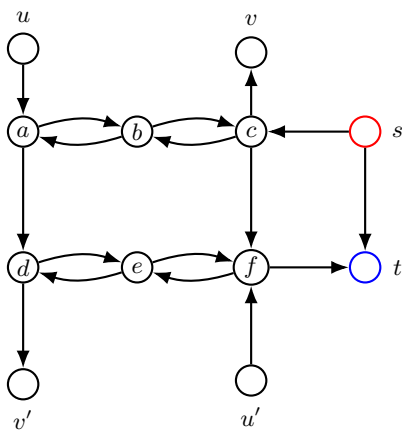

We now describe how these gadgets help us to simulate Boolean sums for $y \in \bfy$ when all these $y$ appear multiple times as edge weights in $G = (V,E)$. Let $G$ have start and sink nodes $s_0,t_0$. First choose some $y \in \bfy$. Suppose $y$ occurs as edge weight $i$ times in $G$. We now look at the Rosette graph $R_g = (V_r,E_r)$ with $i$ indicator edges and construct $G_1 = (V_1,E_1)$ in the following way: $V_1 = V \cup V_r$. For $E_1$, we add all edges of $E_r$ in $E_1$. We also add all edges in $E$ except the edge from $t_0$ to $s_0$ to $E_1$. Finally, we add an edge from $t_0$ to $s$ of the Rosette Graph with edge weight as $1$. Intuitively, this ensures that the contributions from the paths in Rosette Graph are multiplied with the paths in the original graph $G$. Then, as described above, we can use the glue graph to glue every indicator edge with some occurrence of $y$ in $G_1$ to get a Hamiltonian Cycle of the same weight \emph{and} order of multiplication preserved. Now, notice that the size $R_g$ with $i$ indicator edges has $3i+1$ nodes and hence $O(i^2)$ edges. So total size of $R_g$ is $O(i^2)$. Similarly, any glue graph has constantly many nodes and edges, and thus size of $G_g$ is $O(1)$. Let the original graph $G$ have $n$ nodes. Then, it has at most $n^2$ edges. Thus, the sum of occurrences of each $y \in \bfy$ in $G$ as edge weight is $\leq n^2$. Thus, adding Rosette Graphs for each of these $y$ only leads to at most $O(n^4)$ overhead to create $G_1$. Then adding at most $n^2$ glue graphs increases size by at most by $O(n^2)$. Hence, the final graph $G_{final}$ has size polynomial in the size of the input graph, and it allows us to simulate Boolean sums using the $\hcac$ polynomial. 
\begin{proposition}
    Boolean sums over formulas can be simulated by $\hcac$ with only a polynomial overhead.
\end{proposition}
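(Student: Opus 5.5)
We prove the proposition by induction on the number $m$ of Boolean variables $y_1,\ldots,y_m$. We eliminate them one at a time, keeping the following invariant: the current graph $G^{(j)}$ has a start/sink pair $s_0,t_0$ joined by a weight-$1$ edge $(t_0,s_0)$, and the weighted sum of its Hamiltonian cycles, each read as an ordered product starting from $s_0$, equals $\sum_{a_1,\ldots,a_j\in\{0,1\}} g(\bfx,a_1,\ldots,a_j,y_{j+1},\ldots,y_m)$. The base case $j=0$ is Lemma~\ref{lem:hcac-sim-formulas}, applied to a formula for $g$. Because the summation over $a_{j+1}$ commutes with the previous sums, each step only has to implement $h(y)\mapsto h(0)+h(1)$ for one variable $y=y_{j+1}$.

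For a single step we split into cases according to the number $i$ of edges labelled $y$.
\begin{itemize}
\item If $i=0$, the sum simply doubles the polynomial. We multiply by $2$ all weights between two consecutive layers, which every cycle crosses exactly once, as in the single-occurrence argument above.
\item If $i=1$, we use the replacement already described: set the $y$-edge to $1$ and double the other edges between the same two layers.
\item If $i\ge 2$, we splice in the Rosette $R_g^i$ between $t_0$ and $s_0$, set every $y$-labelled edge to weight $1$, and glue the $k$-th indicator edge of $R_g^i$ to the $k$-th occurrence of $y$ using a copy of $G_g$.
\end{itemize}
In the case $i\ge2$ we check the accounting. A cycle of the old graph that uses $r$ of the $y$-edges contributes $y^r$ at fixed positions in the ordered product. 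If $r=0$, no indicator edge may be used, and the Rosette supplies exactly two completions, giving the factor $2=0^0+1^0$. If $r=i$, all indicator edges are forced, and there is exactly one completion, giving $1=0^i+1^i$. If $0<r<i$, the glue forces some indicator edges to be used and others not, and for a correct Boolean sum such cycles should contribute $1$ each. Here I would use the fact, stated above, that the Rosette has exactly one Hamiltonian $s\to t$ path for each \emph{nonempty} subset of indicator edges, together with the glue's all-or-nothing behaviour on each pair. So every pattern with $r\ge1$ survives with multiplicity one, with the $y$-weight replaced by $1$, and $r=0$ gets multiplicity two. This is exactly $h(0)+h(1)$ monomial by monomial, because each monomial is multilinear in the positions where $y$ occurs.

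Order of multiplication. All new edges have weight $1$, and the only non-unit weights that change are the $y$'s, which become the scalars $1$ or $2$. Scalars are central in the free algebra, so the ordered product along each original cycle is unchanged. Inserting the rosette and glue vertices only interleaves weight-$1$ factors. One must also check that cycles are still read starting from $s_0$. This holds because the edge $(t_0,s_0)$ is replaced by a path through $R_g^i$ that contains no variables.

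Size. Each $y$ occurs at most $|E|$ times. Its Rosette contributes $O(i)$ vertices and $O(i^2)$ edges, and each glue contributes $O(1)$. The gadgets for one variable therefore add only polynomially many vertices, and summing over $m\le \mathrm{poly}$ variables keeps the final graph polynomial. A projection then relabels it as $\hcac$ on $|V_{final}|$ vertices, with the chosen start vertex as vertex $1$.

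The main obstacle is the multiplicity count in the case $0<r<i$. One must verify rigorously, not just from the figure, that for every subset $T$ of indicator edges the combined graph has exactly one Hamiltonian completion when $T\neq\emptyset$ and exactly two when $T=\emptyset$. The glued indicator edges are no longer ordinary edges but detours through $G_g$, so I would formalise each glue as a gadget that forces a bijection with indicator-edge usage. I would then count Rosette paths by induction on $i$: a path either enters petal $1$ and traverses it, or uses an indicator edge, in which case the traversal direction of the remaining chain is forced. I expect that establishing this uniqueness claim for general $i$, rather than only for $i=5$ as drawn, is where the real work lies.
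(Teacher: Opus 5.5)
Your construction is the paper's: the Rosette is spliced into the closing edge, each indicator edge is glued to one occurrence of $y$, and single occurrences are handled by doubling. Your accounting $2=0^0+1^0$, $1=0^r+1^r$ is also the right one. The problem is that the step you expect to be hard is easy, and the step you take for granted is where the argument breaks.

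\textbf{The Rosette count is straightforward.} The only in-neighbours of $k_o$ are $k_i$ and, via the indicator, $b_{k-1}$.
\begin{itemize}
\item If no indicator is used, every petal is traversed $k_i\to k_o$. Contracting the petals leaves a path $P_1,b_1,\dots,P_i$ whose two ends are each joined to both $s$ and $t$, so there are exactly two paths.
\item If the indicator into $k_o$ is used, the segment $b_{k-1}\to k_o\to k_i\to b_k$ is forced. The forward orientation then propagates in both directions, so there is exactly one path.
\end{itemize}

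\textbf{The glue is not all-or-nothing in isolation.} Besides the two traversals described in the paper, $G_g$ admits a third one: $u\to a\to b\to c\to f\to e\to d\to v'$, with the glue's own edge $s\to t$ carrying the closing chain. This covers all internal vertices, while $v$ is entered and $u'$ is left through other edges. So your plan to formalise each glue as a gadget that forces a bijection cannot be carried out gadget by gadget. The paper asserts the all-or-nothing property without proof, so you inherit this gap.

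\textbf{How to close it.} The fix is a global argument with a fixed orientation. Always glue with the $y$-edge as $(u',v')$ and the indicator edge as $(u,v)$.
\begin{itemize}
\item Apart from glue edges, the vertex set of the layered graph $G$ is left only through the unique out-edge of $t_0$ and entered only through the unique in-edge of $s_0$. Every Hamiltonian cycle uses both.
\item A ``both'' traversal uses one leaving edge $u'\to f$ and one entering edge $d\to v'$.
\item A ``neither'' traversal uses no glue edge crossing the cut.
\item The spurious traversal uses only the entering edge $d\to v'$.
\end{itemize}
A cycle crosses this cut equally often in each direction, so spurious traversals cannot occur, and your multiplicity count then goes through. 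The orientation choice matters: with mixed orientations, spurious traversals in opposite directions could cancel in this count.

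\textbf{A smaller point.} Your $i\le 1$ steps use the fact that each cycle passes between a given pair of consecutive layers exactly once, via an edge between them. After earlier glue steps, some such edges have become glue detours. Either treat all variables with $i\le 1$ first, as the paper does, or double the detours as well. For $i=0$, doubling the in-edge of $s_0$ suffices.
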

Also, observe that in all our reductions, we only added edges of weight $1$ which were multiplied with the original edges. Hence, the final poylnomial $f(x_1,...,x_n)$ is recovered exactly from the $\hcac$ polynomial over $G_{final}$. Hence, using this alongside Lemma~\ref{lem:hcac-sim-formulas} shows that $\hcac$ is $\VNP$-complete under $p$-projections, as defined by Valiant in \cite{valiant1979completeness}.
\begin{remark}
    $p$-projection is a much stronger form of reduction as compared to $c$-reduction, see \cite{burgisser2000completeness} for details.
\end{remark}

\section{Proof of Theorem~\ref{thm:vnpc-complete}}\label{app:sdet-comm-complete}
We first note that $\hcac$ polynomial as defined in Definition~\ref{def:hcac-poly} is also $\VNP$-complete in the commutative setting. This is because, using commutativity, we can write it as $\hcac = \sum_{\sigma \in C_n} x_{1,\sigma(1)}x_{2, \sigma(2)} \cdots x_{n, \sigma(n)}$ which is known to be $\VNP$-complete \cite{valiant1979completeness}.

We now show that the family of polynomials $T_{i,j}$ (as defined in Definition~\ref{def:sdet-family-comm} is in $\VNP$ in the commutative setting. First see that using (\ref{eq:sdet}) in Definition~\ref{def:sdet-family-comm}, we get 
\begin{align}\label{eq:tkl-sdetfam}
    T_{k,l} = \frac{1}{n!} \sum_{(\sigma, \tau) \in \Sym^2_n} \sum_{x_1=1}^m \ldots \sum_{x_{n-1}=1}^m \sgn \sigma \sgn \tau \  m^{\sigma(1),\tau(1)}_{k,x_1}m^{\sigma(2),\tau(2)}_{x_1,x_2}\cdots m^{\sigma(n-1),\tau(n-1)}_{x_{n-2},x_{n-1}}m^{\sigma(n),\tau(n)}_{x_{n-1},l}
\end{align}
 Given any monomial $\mathfrak{m} = 
 \prod_{\substack{i,j \in [n]\\a,b \in [m] }} m^{i,j}_{a,b}$, we can compute the coefficient of $\mathfrak{m}$ in $T_{k,l}$ in polynomial time. First it is easy to see that every non-zero monomial must have terms from different $m^{i,j}$, and further for a fixed $i,j$, the powers of $m^{i,j}_{k,l}$ is at most $1$ for any $k,l \in [m]$. Hence, we now concern ourselves with only these non-zero monomials. Suppose we want to find the coefficient of the monomial $\mathfrak{m} = m^{a_1,b_1}_{x_1,y_1}m^{a_2,b_2}_{x_2,y_2} \cdots m^{a_n,b_n}_{x_n,y_n}$ in $T_{k,l}$. Then, by above we know that $\{a_1,\ldots,a_n\} = \{b_1,\ldots,b_n\} = [n]$, and $x_i,y_i \in [m]$ for all $i$. By, (\ref{eq:tkl-sdetfam}), we know that there must exist $r \neq s \in [n]$ such that $k=x_r$ and $l=y_s$ for the coefficient of $\mathfrak{m}$ to be non-zero in $T_{k,l}$. Further, by commutativity of $m^{i,j}_{a,b}$ all permutations such that $\sigma(i) = a_k, \tau(i) = b_k$ can give rise to $\mathfrak{m}$. Now suppose there are $c_1$ many indexes $r$ such that $k = x_r$, and $c_2$ many indexes $s$ such that $l = y_s$. By commutativity, we can rearrange $\mathfrak{m}$ as \[
 \mathfrak{m} = m^{a_r,b_r}_{k,y_r}\cdots m^{a_s,b_s}_{x_s,l}
 \]
 This fixes $\sigma(1) = a_r$ and $\tau(1) = b_r$ and $\sigma(n) = a_s$ and $\tau(s) = b_s$. While for the other $n-2$ indexes, we have $(n-2)!$ choices for $\sigma$, which also fixes $\tau$ as $\sigma(i) = a_k$ implies $\tau(i) = b_k$. But this simply means $\alpha = \tau \circ \sigma^{-1}$ is the permutation that sends $a_i$ to $b_i$ for each $i$. Hence, $\sgn \sigma \sgn \tau = \sgn \alpha$ is fixed for this monomial. Further, there are $K = c_1c_2 - c_3$ many ways ways to rearrange $\mathfrak{m}$. The $c_3$ here counts the number of terms which have $x_r=k$ \emph{and} $y_s=l$, and we subtract this to avoid double counting. Thus, incorporating all this, we get that the coefficient of $\mathfrak{m} = m^{a_1,b_1}_{x_1,y_1}m^{a_2,b_2}_{x_2,y_2} \cdots m^{a_n,b_n}_{x_n,y_n}$ in $T_{k,l}$ is $\frac{K (n-2)!}{n!} \sgn \alpha$. Since $K,\alpha$ can be computed in polynomial time given $\mathfrak{m}$, we get each coefficient in polynomial time.

Finally, to show $\VNP$-completeness, we adapt the proof of Theorem~\ref{thm:vnpac-complete}. Given a directed graph $G$, we set $M_{i,j}$ as before, with the only difference being that $x_{i,j}$ are now commutative variables. More concretely, if there is an edge from $1$ to $j$, we set $A_{i,j} = x_{1,j} f \otimes e_{1,j}$. We can similarly set all entries, similar to proof of Theorem~\ref{thm:vnpac-complete}. Since $f$ was $2 \times 2$ and $e_{i,j}$ was $n \times n$, $\alg = \mat(\Q,2n,2n)$. Then using the same argument, in the end we get $\sdet A = \frac{(-1)^n}{n!} \hcac (f \otimes e_{1,1}) $. Hence, $\sdet_{2n,n} M = \{T_{1,1}, \ldots, T_{2n,2n}\}$ is such that $T_{i,j} = 0$ for all but one pair $(i,j)$. In particular, $T_{1,n+1} = \frac{(-1)^n}{n!} \hcac$. Thus, computing it is $\VNP$-hard. But under $c$-reductions, we can take projections of $\sdet M$. Concretely, $n!(-1)^ne_1^T (\sdet M )e_{n+1} = \hcac$ where $e_i$ are the unit column vectors in $i$-th direction. Thus, we can conclude that $\sdet_{m,n}$ is $\VNP$-complete under $c$-reductions.

\end{document}
\typeout{get arXiv to do 4 passes: Label(s) may have changed. Rerun}